\newcolumntype{P}[1]{>{\raggedleft\arraybackslash}p{#1}}
\newcommand\redsout{\bgroup\markoverwith{\textcolor{red}{\rule[0.5ex]{2pt}{0.4pt}}}\ULon}  
\def\ackname{Data Availability Statement}%
\def\acknowledgement{\par\addvspace{17pt}\small\rmfamily
	\trivlist\if!\ackname!\item[]\else
	\item[\hskip\labelsep
	{\bfseries\ackname}]\fi}
\newcommand{\oomit}[1]{}
\def\@citecolor{blue}%
\def\@urlcolor{blue}%
\def\@linkcolor{blue}%
\begin{document}

\title{Learning One-Clock Timed Automata\thanks{This work has been partially funded by NSFC under grant No.~61625206, 61972284, 61732001 and 61872341, by the ERC Advanced Project FRAPPANT under grant No.~787914, and by the CAS Pioneer Hundred Talents Program under grant No.~Y9RC585036.}}

\titlerunning{Learning One-Clock Timed Automata}

\author{
Jie An\inst{1}$^{\text{(\Letter)}}$
\and 
Mingshuai Chen\inst{2,3,4}
\and 
Bohua Zhan\inst{3,4}
\and \\
Naijun Zhan\inst{3,4}$^{\text{(\Letter)}}$
\and
Miaomiao Zhang\inst{1}$^{\text{(\Letter)}}$
}
\authorrunning{J.\ An et al.}

\institute{
School of Software Engineering, Tongji University, Shanghai, China\\
\email{\{1510796,miaomiao\}@tongji.edu.cn}
\and
Lehrstuhl f{\"u}r Informatik 2, RWTH Aachen University, Aachen, Germany\\
\email{chenms@cs.rwth-aachen.de}
\and
State Key Lab. of Computer Science, Institute of Software, CAS, Beijing, China  \\
\email{\{bzhan,znj\}@ios.ac.cn}
\and 
University of Chinese Academy of Sciences, Beijing, China
}

\maketitle


\setcounter{footnote}{0}

\setlength{\floatsep}{1\baselineskip}
\setlength{\textfloatsep}{1\baselineskip}
\setlength{\intextsep}{1\baselineskip}

\begin{abstract}
  We present an algorithm for active learning of deterministic timed automata with a single clock. The algorithm is within the framework of Angluin's $L^*$ algorithm and inspired by existing work on the active learning of symbolic automata. Due to the need of guessing for each transition whether it resets the clock, the algorithm is of exponential complexity in the size of the learned automata. Before presenting this algorithm, we propose a simpler version where the teacher is assumed to be \emph{smart} in the sense of being able to provide the reset information. We show that this simpler setting yields a polynomial complexity of the learning process. Both of the algorithms are implemented and evaluated on a collection of randomly generated examples. We furthermore demonstrate the simpler algorithm on the functional specification of the TCP protocol.

  \keywords{Automaton learning \and Active learning \and One-clock timed automata \and Timed language \and Reset-logical-timed language.}
\end{abstract}

\section{Introduction}\label{sec:introduction}
In her seminal work~\cite{Angluin87}, Angluin introduced the $L^*$ algorithm for learning a regular language from queries and counterexamples within a query-answering framework. The Angluin-style learning therefore is also termed \emph{active learning} or \emph{query learning}, which is distinguished from \emph{passive learning}, i.e., generating a model from a given data set. Following this line of research, an increasing number of efficient active learning methods (cf.~\cite{Vaandrager17}) have been proposed to learn, e.g., Mealy machines~\cite{ShahbazG09,MargariaNRS04}, I/O automata~\cite{AartsV10}, register automata~\cite{HowarSJC12,AartsFKV15,CasselHJS16}, nondeterministic finite automata~\cite{BolligHKL09}, B\"uchi automata~\cite{Farzan08,LiCZL17}, symbolic automata~\cite{Maler14,Drews17,ArgyrosD18} and Markov decision processes~\cite{TapplerA0EL19}, to name just a few. Full-fledged libraries, tools and applications are also available for automata-learning tasks~\cite{Bollig10,Isberner15,Fiterau16,Fiterau17}.

For real-time systems where timing constraints play a key role, however, learning a formal model is much more complicated. As a classical model for real-time systems, timed automata~\cite{Alur94} have an infinite set of timed actions. 
This yields a fundamental difference to finite automata featuring finite alphabets. Moreover, it is difficult to detect resets of clock variables from observable behaviors of the system. This makes learning formal models of timed systems a challenging yet interesting problem.

Various attempts have been carried out in the literature on learning timed models, which can be classified into two tracks. The first track pursues active learning methods, e.g.~\cite{Grinchtein10} for learning event-recording automata (ERA)~\cite{Alur99} and \cite{scis2020} for learning real-time automata (RTA)~\cite{Dima01}. ERA are time automata where, for every untimed action $a$, a clock is used to record the time of the last occurrence of $a$. The underlying learning algorithm~\cite{Grinchtein10}, however, is prohibitively complex due to too many degrees of freedom and multiple clocks for recording events. RTA are a class of special timed automata with one clock to record the execution time of each action by resetting at the starting. 
The other track pursues passive learning. In \cite{Verwer07,VerwerWW12}, an algorithm was proposed to learn deterministic RTA. The basic idea is that the learner organizes a tree sketching traces of the data set while merging nodes of the tree following a certain heuristic function. A passive learning algorithm for timed automata with one clock was further proposed in~\cite{VerwerWW09,VerwerWW11}. A common weakness of passive learning methods is that the generated model merely accepts all positive traces while it rejects all negative ones for the given set of traces, without guaranteeing that it is a correct model of the target system. A theoretical result was established in \cite{VerwerWW11} showing it is possible to obtain the target system by continuously enriching the data set, however the number of iterations is unknown. In addition, the passive learning methods cited above concern only discrete-time semantics of the underlying timed models, i.e., the clock takes values from non-negative integers. We furthermore refer the readers to~\cite{CaldwellCF16,PastoreMM17} for learning specialized forms of practical timed systems in a passive manner,~\cite{TapplerALL19} for passively learning timed automata using genetic programming which scales to automata of large sizes, \cite{Schmidt13} for learning probabilistic real-time automata incorporating clustering techniques in machine learning, and~\cite{TapplerA0EL19} for $L^*$-based learning of Markov decision processes with testing and sampling.

In this paper, we present the first active learning method for deterministic one-clock timed automata (DOTAs) under continuous-time semantics\footnote{The proposed learning method applies trivially to discrete-time semantics too.}. Such timed automata provide simple models while preserving adequate expressiveness, and therefore have been widely used in practical real-time systems~\cite{StiggeEGY11,AbdullahDM018,DenningS81}. We present our approach in two steps. First, we describe a simpler algorithm, under the assumption that the teacher is \emph{smart} in the sense of being able to provide information about clock resets in membership and equivalence queries. The basic idea is as follows. We define the \emph{reset-logical-timed language} of a DOTA and show that the timed languages of two DOTAs are equivalent if their reset-logical-timed languages are equivalent, which reduces the learning problem to that of learning a reset-logical-timed language.Then we show how to learn the reset-logical-timed language following Maler and D'Antoni's learning algorithms for symbolic automata~\cite{Maler14,Drews17}. We claim the correctness, termination and polynomial complexity of this learning algorithm. Next, we extend this algorithm to the case of a normal teacher. The main difference is that the learner now needs to \emph{guess} the reset information on transitions discovered in the observation table. Due to these guesses, the latter algorithm features exponential complexity in the size of the learned automata. The proposed learning methods are implemented and evaluated on randomly generated examples. We also demonstrate the simpler, polynomial algorithm on a practical case study concerning the functional specification of the TCP protocol. Detailed proofs for theorems and lemmas in this paper can be found in Appendix~\ref{appendix_proof}.

In what follows, Sect.~\ref{sec:preliminaries} provides
preliminary definitions on one-clock timed automata. The learning
algorithm with a smart teacher is presented and analyzed in
Sect.~\ref{sec:smartteacher}. We then present the situation with a
normal teacher in Sect.~\ref{sec:normalteacher}. The experimental
results are reported in Sect.~\ref{sec:experiments}. Finally,
Sect.~\ref{sec:conclusion} concludes this paper.

\section{Preliminaries}\label{sec:preliminaries}
Let $\mathbb{R}_{\geq 0}$ and $\mathbb{N}$ be the set of non-negative
reals and natural numbers, respectively, and $\mathbb{B}$ the Boolean
set. We use $\top$ to stand for $\text{true}$ and $\bot$ for
$\text{false}$. The projection of an $n$-tuple $\mathbf{x}$ onto its
first two components is denoted by $\Pi_{\{1, 2\}}\mathbf{x}$, which
extends to a sequence of tuples as
$\Pi_{\{1,2\}}(\mathbf{x}_1,\ldots,\mathbf{x}_k) =
\left(\Pi_{\{1,2\}}\mathbf{x}_1, \ldots,
  \Pi_{\{1,2\}}\mathbf{x}_k\right)$.

\oomit{
\subsection{Learning deterministic finite automata}\label{sbsc:Lstar}
We start by briefly reviewing the $L^{*}$ algorithm for learning regular sets from membership queries and equivalence queries. Angluin proved in~\cite{Angluin87} that the class of regular languages could be learned efficiently, namely with time polynomial in the size of the canonical deterministic finite automaton for the target language.

\oomit{
\begin{definition}[Deterministic Finite Automata]
	A deterministic finite automaton (DFA) is a 5-tuple $\mathcal{A} = (Q, \Sigma, \delta, q_0, F)$ where $Q$ is a non-empty finite set of states; $\Sigma$ is a finite alphabet; $\delta: Q \times \Sigma \rightarrow Q$ is the transition relation, a partial function on $Q \times \Sigma$; $q_0 \in Q$ is the initial state and $F \subseteq Q$ is the set of accepting (final) states.
\end{definition}

A \emph{word} over $\Sigma$ is a finite sequence $\omega = \sigma_1 \sigma_2 \dots \sigma_n$, where $\sigma_i \in \Sigma$ for $i = 1,2,\dots,n$. $|\omega|=n$ is the length of $\omega$. $\epsilon$ is the empty word with length $|\epsilon|=0$. A word $\omega$ is called an \emph{action} if $|\omega|=0$ or $|\omega|=1$. $\Sigma^{*}$ is the set of words over $\Sigma$. The transition function $\delta$ can be extended to $\hat{\delta}:Q\times \Sigma^{*} \rightarrow Q$, where $\hat{\delta}(q,\epsilon)=q$, and $\hat{\delta}(q,\omega\cdot \sigma)=\hat{\delta}(\hat{\delta}(q,\omega),\sigma)$ for $q\in Q$, $\sigma\in\Sigma$, and $\omega\in\Sigma^{*}$. A word $\omega \in \Sigma^{*}$ is \emph{accepted} by $\mathcal{A}$ if $\hat{\delta}(q_0,\omega)\in F$. Without causing ambiguity, we also denote $\delta(q,\sigma)=q'$ as $(q,\sigma,q')\in\delta$.
For a transition $(q, \sigma, q') \in \delta$, $q$ and $q'$ are called the \emph{source state} and \emph{target state} of the transition respectively. 
}
In the $L^{*}$ algorithm, a \emph{learner} is designed to construct a deterministic finite automaton (DFA) which recognizes the unknown target language $\mathcal{L}$ by asking a teacher questions. The \emph{teacher} knows the target language $\mathcal{L}$ and can answer the learner's questions, categorized into two types of queries including the \emph{membership query}, i.e., ``Is the word $\omega$ in $\mathcal{L}$ ?", and the \emph{equivalence query}, i.e., ``Is the recognized language $\mathcal{L'}$ of the current hypothesis equal to $\mathcal{L}$ ?". The whole procedure is sketched as follows. The learner first asks membership queries to gather enough information about the automata, which is stored in an observation table. When the table is closed and consistent, the learner constructs a DFA as a hypothesis of the target language. Then he asks an equivalence query. If the teacher's answer is positive, the learner will be sure that the hypothesis indeed recognizes the target language $\mathcal{L}$ and the algorithm terminates. Otherwise, the learner receives a counterexample $ctx$ from the teacher, and will make membership queries guided by the counterexample to gather more information to construct a new hypothesis. Repeat the above procedure until an automaton recognizing the target language is learned. 
}
\oomit{
\begin{definition}[Observation Table]
	An observation table for a DFA $\mathcal{A}$ is a 6-tuple $T = (\Sigma, S, R, E, f, row)$ where $\Sigma$ is a finite alphabet; $S,R,E \subset \Sigma^{*}$ are finite sets, $S$ is called the set of prefixes, $R$ is called the boundary, and $E$ is called the set of suffixes; $s\cdot\sigma \in R$ for $\forall s\in S, \sigma \in \Sigma$; $S, R$ are disjoint\footnote{$\uplus$ : disjoint union of two sets}: $S\cup R = S\uplus R$; $S\cup R$ is a prefix-closed set; $f : (S\cup R)\cdot E \rightarrow \{-,+\}$ is a classification function such that for a word $\omega \cdot e \in (S\cup R)\cdot E$, $f(\omega\cdot e) = -$ if $\omega \cdot e \notin \mathcal{L}(\mathcal{A})$, and $f(\omega\cdot e) = +$ if $\omega \cdot e \in \mathcal{L}(\mathcal{A})$; $row$ is a function to return the vector of $f(\omega\cdot e)$ indexed by $e\in E$ for $\omega\in S\cup R$.
\end{definition}

Before suggesting a hypothesis, the learner asks membership queries to make the observation table $T$ closed and consistent:
\begin{itemize}
	\item closed if for every $r \in R$, there exists $s \in S$ such that $row(s) = row(r)$.
	\item consistent if for every $\omega_1, \omega_2 \in S$, $row(\omega_1)=row(\omega_2)$ implies $row(\omega_1\cdot \sigma) = row(\omega_2\cdot \sigma)$ for $\forall \sigma \in \Sigma$. 
\end{itemize}

If the table is not closed, there is some $r\in R$ such that $row(r)$ is different from $row(s)$ for all $s\in S$. The learner moves the $r$ from $R$ to $S$, adds all words $r\cdot \sigma$ for $\sigma \in \Sigma$ to $R$, and makes membership queries to fill the extended observation table.

If the table is not consistent, one inconsistency is resolved through finding two words $\omega_1, \omega_2 \in S, \sigma \in \Sigma$ and $e\in E$ such that $row(\omega_1) = row(\omega_2)$ and $f(\omega_1 \sigma \cdot e) \neq f(\omega_2 \sigma \cdot e)$ and adding this new suffix $\sigma\cdot e$ to $E$. The learner also needs to fill the extended observation table by making membership queries. The observation table is consistent when no more such words can be found.

If the observation table $T = (\Sigma, S, R, E, f, row)$ is closed and consistent, the learner can construct a hypothesis DFA $H_\mathcal{A} = (Q, \Sigma, \delta, q_0, F)$ where $Q = \{row(s)|s\in S\}$, $F = \{row(s)|f(s\cdot \epsilon) = +\}$, $q_0 = row(\epsilon)$ and $\delta(row(s),\sigma) = row(s \cdot \sigma)$. When receiving a counterexample $ctx$, the learner adds all prefixes of $ctx$ to $S$ and the possible inconsistency should be fixed.
}

\emph{Timed automata} \cite{Alur94}, a kind of finite automata extended with a finite set of real-valued clocks, are widely used to model real-time systems. In this paper, we consider a subclass of timed automata with a single clock, termed \emph{one-clock timed automata} (OTAs). Let $c$ be the clock variable, denote by $\Phi_c$ the set of clock constraints of the form $\phi::= \top \mid c\bowtie m \mid \phi \wedge \phi$, 
where $m\in\mathbb{N}$ and $\bowtie\ \in \{=,<,>,\le,\ge\}$.
\begin{definition}[One-clock timed automata]
A one-clock timed automaton $\mathcal{A}=(\Sigma,Q,\\q_0,F,c,\Delta)$, where $\Sigma$ is a finite set of actions, called the \emph{alphabet}; $Q$ is a finite set of locations; $q_0\in Q$ is the initial location; $F\subseteq Q$ is a set of accepting locations; $c$ is the unique clock; and $\Delta\subseteq Q\times\Sigma\times\Phi_c\times\mathbb{B} \times Q$ is a finite set of transitions.
\oomit{
\begin{itemize}
	\item $\Sigma$ is a finite set of actions, called the \emph{alphabet};
	\item $Q$ is a finite set of locations;
	\item $q_0\in Q$ is the initial location;
	\item $F\subseteq Q$ is a set of accepting locations;
	\item $c$ is the unique clock;
	\item $\Delta\subseteq Q\times\Sigma\times\Phi_c\times\mathbb{B} \times Q$ is a finite set of transitions.
\end{itemize}
}
\end{definition}
%

A transition $\delta = (q,\sigma,\phi,b,q')$ allows a jump from  the \emph{source location} $q$ to the \emph{target location} $q'$ by performing the action $\sigma\in\Sigma$ if the constraint $\phi\in \Phi_c$ is satisfied. Meanwhile, clock $c$ is reset to zero if $b = \top$, and remains unchanged otherwise. 

A \emph{clock valuation} is a function $\nu\colon c \mapsto \mathbb{R}_{\ge 0}$ that assigns a non-negative real number to the clock. For $t\in\mathbb{R}_{\ge 0}$, let $\nu+t$ be the clock valuation with $(\nu+t)(c)=\nu(c)+t$. According to the definitions of clock valuation and clock constraint, a transition \emph{guard} can be represented as an interval whose endpoints are in $\mathbb{N}\cup\{\infty\}$. For example, $\phi_1\colon c<5 \wedge c\ge 3$ is represented as $[3,5)$, $\phi_2\colon c=6$ as $[6,6]$, and $\phi_3\colon \top$ as $[0,\infty)$. We will use the inequality- and interval-representation interchangeably in this paper.

A \emph{state} $s$ of $\mathcal{A}$ is a pair $(q,\nu)$, where $q\in Q$ and $\nu$ is a clock valuation.
A \emph{run} $\rho$ of $\mathcal{A}$ is a finite sequence $\rho=(q_0,\nu_0)\xrightarrow{t_1,\sigma_1}(q_1,\nu_1)\xrightarrow{t_2,\sigma_2}\cdots\xrightarrow{t_n,\sigma_n}(q_n,\nu_n)$, where $\nu_0(c)=0$, $t_i\in\mathbb{R}_{\ge 0}$ stands for the time delay spending on $q_{i-1}$ before  $\delta_i=(q_{i-1},\sigma_i,\phi_i,b_i,q_i)\in\Delta$ is taken, only if  (1) $\nu_{i-1} + t_{i}$ satisfies $\phi_i$, (2) $\nu_{i}(c)=\nu_{i-1}(c)+t_{i}$ if $b_{i}=\bot$, otherwise $\nu_{i}(c)=0$, for all $1\leq i \leq n$.
 A run $\rho$ is \emph{accepting} if $q_n\in F$. 

\oomit{A \emph{timed word} $\omega$ over $\Sigma\times\mathbb{R}_{\ge 0}$ is a finite sequence $\omega=(\sigma_1,\tau_1)(\sigma_2,\tau_2)\cdots(\sigma_n,\tau_n)$ where $\sigma_i\in\Sigma$ and $\tau_i\in\mathbb{R}_{\ge 0}$ for $1 \le i \le n$. $\vert \omega \vert=n$ is the length of $\omega$. We abbreviate $(\epsilon,\tau)$ to $\epsilon$ as the empty timed word for all $\tau\in\mathbb{R}_{\ge 0}$ and let $\vert \epsilon \vert=0$. A timed word $\omega$ is called a \emph{timed action} if $\vert \omega \vert=0$ or $\vert \omega \vert=1$. }

The \emph{\textit{trace}} of a run $\rho$ is a timed word, denoted by $\textit{trace}(\rho)$. $\textit{trace}(\rho)=\epsilon$ if $\rho=(q_0,\nu_0)$, and $\textit{trace}(\rho)=(\sigma_1,t_1)(\sigma_2,t_2)\cdots(\sigma_n,t_n)$ if $\rho=(q_0,\nu_0)\xrightarrow{t_1,\sigma_1}(q_1,\nu_1)\xrightarrow{t_2,\sigma_2}\cdots\xrightarrow{t_n,\sigma_n}(q_n,\nu_n)$. Since $t_i$ is the time delay on $q_{i-1}$, for $1 \leq  i \leq n$, such a timed word is also called \emph{delay-timed word}. The corresponding \emph{reset-delay-timed word} can be defined as $\textit{trace}_r(\rho)=(\sigma_1,t_1,b_1)(\sigma_2,t_2,b_2)\cdots(\sigma_n,t_n,b_n)$, where $b_i$ is the reset indicator for $\delta_{i}$, for $1\leq i \leq n$. If $\rho$ is an accepting run of $\mathcal{A}$, $\textit{trace}(\rho)$ is called an \emph{accepting timed word}. The \emph{recognized timed language} of $\mathcal{A}$ is the set of accepting delay-timed words, i.e.,  $\mathcal{L}(\mathcal{A})=\{\textit{trace}(\rho) \, \vert \, \rho \text{ is an accepting run of } \mathcal{A}\}$. The \emph{recognized reset-timed language} $\mathcal{L}_{r}(\mathcal{A})$ is defined as $\{\textit{trace}_r(\rho) \, \vert \, \rho \text{ is an accepting run of } \mathcal{A}\}$.

The delay-timed word $\omega = (\sigma_1,t_1)(\sigma_2,t_2)\cdots(\sigma_n,t_n)$ is observed outside, from the view of the global clock. On the other hand, the behavior can also be observed inside, from the view of the local clock. This results in a \emph{logical-timed word} of the form $\gamma=(\sigma_1,\mu_1)(\sigma_2,\mu_2)\cdots(\sigma_n,\mu_n)$ with 
$\mu_i = t_i$ if $i=1 \vee b_{i-1} = \top$ and $\mu_i = \mu_{i-1} + t_{i}$ otherwise.
\oomit{
 \begin{equation*}
 \mu_i = \begin{cases}
 t_i, & \text{if}\ i=1\ \text{or}\ b_{i-1} = \top\\
 \mu_{i-1} + t_{i}, & \text{otherwise}.
 \end{cases}
 \end{equation*} 
}
 We will denote the mapping from  delay-timed words to  logical-timed words above by $\Gamma$. 
  
  Similarly, we introduce \emph{reset-logical-timed word} $\gamma_r=(\sigma_1,\mu_1,b_1)(\sigma_2,\mu_2,b_2)$ $\cdots$ $(\sigma_n,\mu_n,b_n)$ 
    as the counterpart of $\omega_r=(\sigma_1,t_1,b_1)(\sigma_2,t_2,b_2)\cdots(\sigma_n,t_n,b_n)$ in terms of the local clock. 
  Without any substantial change, we can  extend the mapping $\Gamma$ to map reset-delay-timed words  to reset-logical-timed words. 
 The \emph{recognized logical-timed language} of  $\mathcal{A}$ is given  as $L(\mathcal{A})=\{\Gamma(\textit{trace}(\rho) )\, \vert \, \rho$ is an accepting run of  $\mathcal{A}\}$, and the \emph{recognized reset-logical-timed language} of  $\mathcal{A}$  as $L_{r}(\mathcal{A})=\{\Gamma(\textit{trace}_r(\rho)) \, \vert \, \rho$  is an accepting run of $\mathcal{A}\}$.

 An OTA is a \emph{deterministic one-clock timed automaton} (DOTA) if
 there is at most one run for a given delay-timed word. In other
 words, for any location $q \in Q$ and action $\sigma\in\Sigma$, the guards
 of transitions outgoing from $q$ labelled with $\sigma$ are disjoint
 subsets of $\mathbb{R}_{\geq 0}$. We say a DOTA is \emph{complete} if
 for any of its location $q \in Q$ and action $\sigma\in\Sigma$, the corresponding guards form a
 partition of $\mathbb{R}_{\geq 0}$.  This means any given delay-timed
 word has exactly one run. Any DOTA $\mathcal{A}$ can be transformed
 into a complete DOTA (referred to as COTA) $\mathbb{A}$ accepting the same timed
 language as follows: (1)\ Augment $Q$ with a ``sink'' location $q_s$ which is not an
 accepting location; (2)\ For every $q \in Q$ and $\sigma \in \Sigma$, if there is no outgoing transition from $q$ labelled with $\sigma$, introduce a (resetting) transition from $q$ to $q_s$ with label $\sigma$ and guard $[0,\infty)$; (3)\ Otherwise, let $S$ be the subset of $\mathbb{R}_{\ge 0}$ not covered by the guards of transitions from $q$ with label $\sigma$. Write $S$ as a union of intervals $I_1,\dots,I_k$ in a minimal way, then introduce a (resetting) transition from $q$ to $q_s$ with label $\sigma$ and guard $I_j$ for each $1\le j\le k$. 
 
 From now on, we therefore assume that we are working with COTAs.
\begin{example}
  Fig.~\ref{fig:dota_cota} depicts the transformation of a DOTA
  $\mathcal{A}$ (left part) into a COTA $\mathbb{A}$ (right
  part). First, a non-accepting ``sink'' location $q_s$ is
  introduced. Second, we introduce three fresh transitions (marked in blue) from $q_1$ to $q_s$ as well as
  transitions from $q_s$ to itself. At last, for location $q_0$ and label $a$, the existing guards
  cover $(1,3)$, with complement $[0,1]\cup [3,\infty)$. Hence, we
  introduce transitions $(q_0,a,[0,1],\top,q_s)$ and
  $(q_0,a,[3,\infty),\top,q_s)$. Two fresh transitions from $q_1$ to $q_s$ are introduced similarly.
\end{example}

\begin{figure}[!t]
	\centering
	\begin{minipage}{0.4\textwidth}
		\centering
		\vspace*{1.6cm}
		\begin{tikzpicture}[->, >=stealth', shorten >=1pt, auto, node distance=2.7cm, semithick, scale = 0.7,every node/.style={scale=0.65}]
		\node[initial, state]  (0) {$q_0$};
		\node[accepting, state](1) [right of = 0] {$q_1$};
		
		\path (0) edge node[above, pos=.45] {$a$, $(1,3)$, $\bot$} (1) 
		(0) edge [loop above] node[above] {$b$, $[0,\infty)$, $\top$} (0)
		(1) edge [loop above] node[above] {$b, [2, 4), \top$} (1);
		\end{tikzpicture}
		\label{fig:dota}
	\end{minipage}
	\begin{minipage}{0.5\textwidth}
		\centering
		\begin{tikzpicture}[->, >=stealth', shorten >=1pt, auto, node distance=4cm, semithick, scale=0.7, every node/.style={scale=0.65}]
		\node[initial,state] (0) {$q_0$};
		\node[accepting, state] (1) [right = 1.6cm of 0] {$q_1$};
		\node[state, fill = violet!80, draw=none, text=white] (2) [below right= 1cm and .6cm of 0] {$q_s$};
		
		\path  (0) edge node[above] {$a$, $(1,3)$, $\bot$} (1)
		(0) edge [loop above] node[above] {$b$, $[0,\infty)$, $\top$} (0)
		(1) edge [loop above] node[above] {$b$, $[2, 4)$, $\top$} (1)
		(0) edge [red] node[above,sloped]  {$a$, $[0,1]$, $\top$} (2)
		(0) edge [bend right, red] node[below,sloped,pos=.45]  {$a$, $[3,\infty)$, $\top$} (2)
		(1) edge [blue] node[above,sloped,pos=.48]  {$a$, $[0,\infty)$, $\top$} (2)
		(1) edge [bend left, red] node[above,sloped]  {$b$, $[0,2)$, $\top$} (2)
		(1) edge [in=0, out=-80, red] node[below,sloped]  {$b$, $[4,\infty)$, $\top$} (2)
		(2) edge [in= 205, out=-120, loop, blue] node[left, blue] {$a$, $[0,\infty)$, $\top$} (2)
		(2) edge [in= -60, out=-25, loop, blue] node[right] {$b$, $[0,\infty)$, $\top$} (2);
		\end{tikzpicture}
		\label{fig:cota}
	\end{minipage}
	\caption{A DOTA $\mathcal{A}$ on the left and the corresponding COTA $\mathbb{A}$ on the right. The initial location is indicated by `start' and an accepting location is doubly circled.}
	\label{fig:dota_cota}
\end{figure}


\section{Learning from a Smart Teacher}\label{sec:smartteacher}
In this section, we consider the case of learning a COTA $\mathbb{A}$
with a smart teacher. Our learning algorithm relies on the following
reduction of the equivalence over timed languages to that of
reset-logical timed languages.

\begin{theorem}\label{theorem:L=>mathcalL}
  Given two DOTAs $\mathcal{A}$ and $\mathcal{B}$, if $L_r(\mathcal{A})=L_r(\mathcal{B})$, then $\mathcal{L}(\mathcal{A})=\mathcal{L}(\mathcal{B})$.
\end{theorem}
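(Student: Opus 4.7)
The plan is to establish both inclusions $\mathcal{L}(\mathcal{A}) \subseteq \mathcal{L}(\mathcal{B})$ and $\mathcal{L}(\mathcal{B}) \subseteq \mathcal{L}(\mathcal{A})$ by a symmetric argument, whose heart is the observation that the map $\Gamma$ restricted to reset-delay-timed words is invertible once the reset pattern is fixed. Concretely, inverting the defining equations of $\Gamma$ gives, for any reset-logical-timed word $(\sigma_1,\mu_1,b_1)\cdots(\sigma_n,\mu_n,b_n)$, a uniquely determined reset-delay-timed word $(\sigma_1,t_1,b_1)\cdots(\sigma_n,t_n,b_n)$ via $t_i = \mu_i$ when $i = 1$ or $b_{i-1} = \top$, and $t_i = \mu_i - \mu_{i-1}$ otherwise. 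I would state this as a short preliminary observation; it is essentially arithmetic.

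Now I would take any $\omega = (\sigma_1,t_1)\cdots(\sigma_n,t_n) \in \mathcal{L}(\mathcal{A})$. By definition there is an accepting run $\rho_{\mathcal{A}}$ of $\mathcal{A}$ with $\textit{trace}(\rho_{\mathcal{A}}) = \omega$, and along $\rho_{\mathcal{A}}$ the transitions taken determine a reset sequence $b_1,\ldots,b_n$ yielding a reset-delay-timed word $\omega_r = (\sigma_1,t_1,b_1)\cdots(\sigma_n,t_n,b_n)$. Setting $\gamma_r := \Gamma(\omega_r)$, we have $\gamma_r \in L_r(\mathcal{A})$, and by hypothesis $\gamma_r \in L_r(\mathcal{B})$. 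Hence there exists an accepting run $\rho_{\mathcal{B}}$ of $\mathcal{B}$ whose reset-logical-timed trace equals $\gamma_r$; let its reset-delay-timed trace be $\omega_r' = (\sigma_1,t_1',b_1')\cdots(\sigma_n,t_n',b_n')$ with $\Gamma(\omega_r') = \gamma_r$.

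Since $\gamma_r$ determines both the reset indicators and, by the invertibility observation, the delays, we obtain $b_i' = b_i$ and $t_i' = t_i$ for every $i$. Therefore $\textit{trace}(\rho_{\mathcal{B}}) = \Pi_{\{1,2\}}(\omega_r') = \omega$, so $\omega \in \mathcal{L}(\mathcal{B})$. Swapping $\mathcal{A}$ and $\mathcal{B}$ in the argument gives the reverse inclusion, completing the proof.

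The argument has no serious obstacle: the potential subtlety is only ensuring that the reset pattern extracted from an accepting run is the \emph{same} data that appears in the reset-logical-timed word used to move between the two automata. This is immediate from the definitions of $\textit{trace}_r$ and $\Gamma$, since both record resets positionally and $\Gamma$ preserves the reset coordinate. No determinism of the automata is actually needed for the inclusion direction itself (although determinism is what makes $\rho_{\mathcal{A}}$ unique for a given $\omega$); the proof works for any two OTAs whose reset-logical-timed languages coincide.
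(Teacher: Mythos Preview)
Your proof is correct and follows essentially the same idea as the paper's: both hinge on the fact that $\Gamma$ is a bijection between reset-delay-timed words and reset-logical-timed words, so that $L_r(\mathcal{A})=L_r(\mathcal{B})$ forces $\mathcal{L}_r(\mathcal{A})=\mathcal{L}_r(\mathcal{B})$, which in turn projects down to $\mathcal{L}(\mathcal{A})=\mathcal{L}(\mathcal{B})$. The paper states this at the level of languages while you unfold it word by word, and your remark that determinism plays no role in the argument is also correct.
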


Theorem~\ref{theorem:L=>mathcalL} assures that $L_r(\mathcal{H})=L_r(\mathbb{A})$ implies $\mathcal{L}(\mathcal{H})=\mathcal{L}(\mathbb{A})$, that is, to construct a COTA $\mathbb{A}$ that recognizes a target timed language $\mathcal{L}=\mathcal{L}(\mathbb{A})$, it suffices to learn a \emph{hypothesis} $\mathcal{H}$ which recognizes the same reset-logical timed language. For equivalence queries, instead of checking directly whether $L_r(\mathcal{H}) = L_r(\mathbb{A})$, the contraposition of Theorem~\ref{theorem:L=>mathcalL} guarantees that we can perform equivalence queries over their timed counterparts: if $\mathcal{L}(\mathcal{H}) = \mathcal{L}(\mathbb{A})$, then $\mathcal{H}$ recognizes the target language already; otherwise, a counterexample making $\mathcal{L}(\mathcal{H}) \neq \mathcal{L}(\mathbb{A})$ yields an evidence also for $L_r(\mathcal{H}) \neq L_r(\mathbb{A})$.

We now describe the behavior of the teacher who keeps an
automaton $\mathbb{A}$ to be learnt, while providing knowledge about the automaton
by answering membership and equivalence
queries through an oracle she maintains. For the membership query, the teacher receives a logical-timed
word $\gamma$ and returns whether $\gamma$ is in $L(\mathbb{A})$. In
addition, she is smart enough to return the reset-logical-timed word $\gamma_r$ that corresponds to $\gamma$ (the exact correspondence is described in
Sect. \ref{sbsc:membership}). For the equivalence query, the teacher
is given a hypothesis $\mathcal{H}$ and decides whether
$\mathcal{L}(\mathcal{H}) = \mathcal{L}(\mathbb{A})$. If not, she is smart enough to
return a reset-delayed-timed word $\omega_r$ as a counterexample. The
usual case where a teacher can deal with only standard delay-timed
words will be discussed in Sect.~\ref{sec:normalteacher}.


\begin{remark}
  The assumption that the teacher can respond with timed words coupled
  with reset information is reasonable, in the sense that the learner
  can always infer and detect the resets of the logical clock by
  referring to a global clock on the wall, as long as he can observe running states of $\mathbb{A}$, i.e., observing the clock valuation of the system whenever an event happens therein. This conforms with the idea of combining automata learning with white-box techniques, as exploited in~\cite{HowarJV19}, providing that in many application scenarios source code is available for the analysis.
\end{remark}

\oomit{\begin{definition}[logical timed word, reset logical timed word ]\label{df:valuation_word}
	Given a run $\rho=(q_0,\nu_0)$ $\xrightarrow{t_1,\delta_1}(q_1,\nu_1)$ $\xrightarrow{t_2,\delta_2}\cdots\xrightarrow{t_n,\delta_n}(q_n,\nu_n)$, 
	\begin{itemize}
		\item a \emph{logical timed word} over $\Sigma\times\mathbb{R}_{\geq 0}$ is of the form  $\gamma=(\sigma_1,\mu_1)(\sigma_2,\mu_2)\cdots(\sigma_n,\mu_n)$;
		\item a \emph{reset logical timed word } over $\Sigma\times\mathbb{R}_{\geq 0}\times\mathbb{B}$ is of the form  $\gamma_r=(\sigma_1,\mu_1,b_1)(\sigma_2,\mu_2,b_2)$ $\cdots$ $(\sigma_n,\mu_n,b_n)$;
	\end{itemize}
 where $\sigma_i\in\delta_i$, $\mu_i=\nu_{i-1}(c)+t_i$ and $b_i \in \mathbb{B}$,  for $1\leq i \leq n$ 
\end{definition}

Hence, given a run $\rho$, $\mu_i$ is clock valuation when the action $\sigma_i$ was conducted. $\bm{\sigma}=(\sigma,\mu)$ is called \emph{local timed action}. We call $\bm{\sigma_r}=(\sigma',\mu',b')$ as \emph{reset-local timed action} and denote the \emph{corresponding local timed action} of $\bm{\sigma_r}$  as $\hat{\bm{\sigma_r}}=(\sigma',\mu')$. The following lemma shows how to get an unique reset logical timed word  according to a given reset-delay-timed word.

\begin{lemma}\label{lemma:unique_valuation_word}
	Given a run $\rho$ of a DOTA $\mathcal{A}$, there are an unique reset-delay-timed word $\omega^{d}_{r}=(\sigma_1,t_1, b_1)(\sigma_2,t_2,b_2)\cdots(\sigma_n,t_n,b_n)$, and an unique reset logical timed word  $\gamma_r = (\sigma_1,\mu_1,b_1)(\sigma_2,\mu_2,b_2)\cdots(\sigma_n,\mu_n,b_n)$, where $\mu_1 = t_1$, for $2\leq i \leq n$, 
	\begin{equation*}
		\mu_i = \begin{cases}
		t_i, & \text{if } b_{i-1}=0; \\
		\mu_{i-1} + t_{i}, & \text{if } b_{i-1}\neq 0.
		\end{cases}
	\end{equation*}
\end{lemma}
\begin{proof}
	The fact in this lemma follows immediately from definitions of clock valuation, run, reset-delay-timed word, and reset logical timed word .
\end{proof}

A \emph{mixed-logical timed word} $\gamma_{m}=(\sigma_1,\mu_1,b_1)\cdots(\sigma_k,\mu_k,b_k)(\sigma_{k+1},\mu_{k+1})\cdots(\sigma_n,\mu_n)$ where $0\leq k \leq n$, can be divided into two part that the prefix has reset information and the suffix has no reset information. Hence, the reset logical timed word  and logical timed word can be regarded as two special cases of mixed-logical timed word. A mixed-logical timed word $\gamma_m$ is a \emph{valid} logical timed word of a DOTA $\mathcal{A}$ iff $\gamma_m$ belongs to a run $\rho$ of $\mathcal{A}$. Specially, for a COTA, an invalid mixed-logical timed word can be checked easily if there exists $b_i=\bot$ and $\mu_i > \mu_{i+1}$ for $1\leq i\leq n$. 

\begin{lemma}\label{lemma:unique_run}
Given a valid mixed-logical timed word $\gamma_m$ of a DOTA $\mathcal{A}$, there is an unique reset logical timed word  $\gamma_r$ and it belongs to an unique run $\rho$ of $\mathcal{A}$. 
\end{lemma}
\begin{proof}
	Since $\mathcal{A}$ is deterministic, the fact in this lemma follows immediately.
\end{proof}

Hence, we call $L(\mathcal{A})=\{\gamma_{r}(\rho) \, \vert \, \rho \text{ is an accepting run of } \mathcal{A}\}$ as \emph{recognized reset-valuation language}. 

\begin{theorem}\label{theorem:L<=>mathcalLr}
	Given two DOTAs $\mathcal{A}$ and $\mathcal{B}$, $\mathcal{L}_{r}(\mathcal{A})=\mathcal{L}_{r}(\mathcal{B})$ iff $L(\mathcal{A})=L(\mathcal{B})$.
\end{theorem}

\begin{theorem}\label{theorem:L=>mathcalL}
	Given two DOTAs $\mathcal{A}$ and $\mathcal{B}$, if $L(\mathcal{A})=L(\mathcal{B})$, then $\mathcal{L}(\mathcal{A})=\mathcal{L}(\mathcal{B})$.
\end{theorem}
}

%

In what follows, we elaborate the learning procedure including membership queries, hypotheses construction, equivalence queries and counterexample processing.

\subsection{Membership query}\label{sbsc:membership}

In our setting, the oracle maintained by the smart teacher can be regarded as a COTA $\mathbb{A}$ that recognizes the target timed language $\mathcal{L}$, and thereby its logical-timed language $L(\mathbb{A})$ and reset-logical-timed counterpart $L_r(\mathbb{A})$. In order to collect enough information for constructing a hypothesis, the learner makes membership queries as ``Is the logical-timed word $\gamma$ in $L(\mathbb{A})$?''. If there does not exist a run $\rho$ such that $\Gamma(\textit{trace}(\rho)) = \gamma$, meaning that there is some $k$ such that the run is blocked after the $k$'th action (i.e. $\gamma$ is \emph{invalid}) and hence the teacher gives a negative answer, associated with a reset-logical-timed word $\gamma_r$ where all $b_i$'s with $i > k$ are set to $\top$;
If there exists a run $\rho$ (which is unique due to the determinacy of $\mathbb{A}$) that admits $\gamma$ (i.e., $\gamma$ is \emph{valid}), the teacher answers ``Yes'', if $\rho$ is accepting, or ``No'' otherwise, while in both cases providing the corresponding reset-logical-timed word $\gamma_r$, with $\Pi_{\{1,2\}} \gamma_r = \gamma$.

For the sake of simplicity, we define a function $\pi$ that maps a logical-timed word to its unique reset-logical-timed counterpart in membership queries. Information gathered from the membership queries is stored in a timed observation table defined as follows.

\begin{definition}[Timed observation table]
	A timed observation table for a COTA $\mathbb{A}$ is a 7-tuple $\mathbf{T} = (\Sigma, \bm{\Sigma}, \bm{\Sigma_r}, \bm{S}, \bm{R}, \bm{E}, f)$ where $\Sigma$ is the finite alphabet; $\bm{\Sigma} = \Sigma \times \mathbb{R}_{\geq 0}$ is the infinite set of logical-timed actions; $\bm{\Sigma_r} = \Sigma\times\mathbb{R}_{\geq 0}\times\mathbb{B}$ is the infinite set of reset-logical-timed actions; $\bm{S},\bm{R} \subset \bm{\Sigma_r}^{*}$ and $\bm{E}\subset\bm{\Sigma}^{*}$ are finite sets of words, where $\bm{S}$ is called the set of prefixes, $\bm{R}$ the boundary, and $\bm{E}$ the set of suffixes. Specifically,
	\begin{itemize}
		\item $\bm{S}$ and $\bm{R}$ are disjoint, i.e., $\bm{S}\cup\bm{R} = \bm{S}\uplus\bm{R}$;
		\item The empty word is by default both a prefix and a suffix, i.e., $\epsilon \in \bm{E}$ and $\epsilon \in \bm{S}$;
		\item $f\colon (\bm{S}\cup\bm{R})\cdot \bm{E} \mapsto \{-,+\}$ is a classification function such that for a reset-logical-timed word $\gamma_r$, $\gamma_r \cdot e \in (\bm{S}\cup \bm{R})\cdot \bm{E}$, $f(\gamma_r \cdot e) = -$ if $\Pi_{\{1,2\}}\gamma_r \cdot e$ is invalid, otherwise if $\Pi_{\{1,2\}}\gamma_r \cdot e \notin L(\mathbb{A})$, $f(\gamma_r \cdot e) = -$, and $f(\gamma_r \cdot e) = +$ if $\Pi_{\{1,2\}}\gamma_r \cdot e \in L(\mathbb{A})$.
	\end{itemize}
\end{definition}

Given a table $\mathbf{T}$, we define
$\mathit{row}\colon \bm{S}\cup \bm{R} \mapsto (\bm{E} \mapsto \{+,-\}) $ as a function mapping each $\gamma_r\in \bm{S}\cup \bm{R}$ to a vector
indexed by $e \in \bm{E}$, each of whose components is defined as
$f(\gamma_r \cdot e)$, denoting a potential
location. 

Before constructing a hypothesis $\mathcal{H}$ based on the timed observation table $\mathbf{T}$, the learner has to ensure that $\mathbf{T}$ satisfies the following conditions:
\begin{itemize}
	\item \emph{Reduced:} $\forall s, s'\in \bm{S}\colon s \neq s' \text{ implies } \mathit{row}(s) \neq \mathit{row}(s')$;
	\item \emph{Closed:} $\forall r \in \bm{R}, \exists s \in \bm{S}\colon \mathit{row}(s) = \mathit{row}(r)$;
	\item \emph{Consistent:} $\forall \gamma_r, {\gamma_r}' \in \bm{S}\cup\bm{R}$, $\mathit{row}({\gamma_r})=\mathit{row}({\gamma_r}')$ implies $\mathit{row}({\gamma_r} \cdot {\bm{\sigma_r}}) = \mathit{row}({\gamma_r}'\cdot {\bm{\sigma_r}}')$, for all ${\bm{\sigma_r}},{\bm{\sigma_r}}'\in \bm{\Sigma_r}$ satisfying ${\gamma_r} \cdot {\bm{\sigma_r}}, {\gamma_r}' \cdot {\bm{\sigma_r}}' \in \bm{S}\cup\bm{R}$ and $\Pi_{\{1,2\}}{\bm{\sigma_r}} = \Pi_{\{1,2\}}{\bm{\sigma_r}}'$;
	\item \emph{Evidence-closed:} $\forall s \in \bm{S}$ and $\forall e \in \bm{E}$, the reset-logical-timed word $\pi(\Pi_{\{1,2\}}s \cdot e)$ belongs to $\bm{S}\cup\bm{R}$;
	\item \emph{Prefix-closed:} $\bm{S}\cup\bm{R}$ is prefix-closed.
\end{itemize}

A timed observation table $\mathbf{T}$ is \emph{prepared} if it satisfies the above five conditions. To get the table prepared, the learner can perform the following operations:
\begin{description}[itemsep=5pt, align=left, leftmargin=0pt, font=\normalfont\itshape]
\item[Making \bm{$\mathbf{T}$} closed.] If $\mathbf{T}$ is not closed, there exists $r \in \bm{R}$ such that for all $s\in\bm{S}$ $\mathit{row}(r) \neq \mathit{row}(s)$. The learner thus can move such $r$ from $\bm{R}$ to $\bm{S}$. Moreover, each reset-logical-timed word $\pi(\Pi_{\{1,2\}}r \cdot \bm{\sigma})$ needs to be added to $\bm{R}$, where $\bm{\sigma}=(\sigma, 0)$ for all $\sigma \in \Sigma$. Such an operation is important since it guarantees that at every location all actions in $\Sigma$ are enabled, while specifying a clock valuation of these actions, despite that some invalid logical-timed words might be involved. Particularly, giving a bottom value $0$ as the clock valuation satisfies the precondition of the partition functions that will be described in Sect.~\ref{sbsc:hypo}.
\item[Making \bm{$\mathbf{T}$} consistent.] If $\mathbf{T}$ is not consistent, one inconsistency is resolved by adding $\bm{\sigma} \cdot e$ to $\bm{E}$, where $\bm{\sigma}$ and $e$ can be determined as follows. $T$ being inconsistent implies that 
there exist two reset-logical-timed words ${\gamma_r},{\gamma_r}'\in\bm{S}\cup\bm{R}$ at least, such that 
 ${\gamma_r} \cdot {\bm{\sigma_r}}, {\gamma_r}' \cdot {\bm{\sigma_r}}' \in\bm{S}\cup\bm{R}$ and $\Pi_{\{1,2\}}{\bm{\sigma_r}}=\Pi_{\{1,2\}}{\bm{\sigma_r}}'$ for some  ${\bm{\sigma_r}},{\bm{\sigma_r}}'\in\bm{\Sigma_r}$, with $\mathit{row}({\gamma_r})=\mathit{row}({\gamma_r}')$ but $\mathit{row}({\gamma_r}\cdot{\bm{\sigma_r}}) \neq \mathit{row}({\gamma_r}'\cdot {\bm{\sigma_r}}')$. So,   
let $\bm{\sigma} = \Pi_{\{1,2\}}{\bm{\sigma_r}} = \Pi_{\{1,2\}}{\bm{\sigma_r}}'$ and $e \in \bm{E}$ such that $f({\gamma_r} {\bm{\sigma_r}} \cdot e) \neq f({\gamma_r}' {\bm{\sigma_r}}' \cdot e)$. Thereafter, the learner fills the table by making membership queries. Note that this operation keeps the set $\bm{E}$ of suffixes being a set of logical-timed words.
\item[Making \bm{$\mathbf{T}$} evidence-closed.] If $\mathbf{T}$ is not evidence-closed, then the learner needs to add all prefixes of $\pi(\Pi_{\{1,2\}}s \cdot e)$ to $\bm{R}$ for every $s\in\bm{S}$ and $e\in\bm{E}$, except those already in $\bm{S}\cup\bm{R}$. Similarly, the learner needs to fill the table through membership queries.
%
\oomit{
\item[Making \bm{$\mathbf{T}$} prefix-closed.] If $\mathbf{T}$ is not prefix-closed, the learner should add all the prefixes of $\gamma_r\in\bm{S}\cup\bm{R}$ to $\bm{R}$ except those already in $\bm{S}\cup\bm{R}$. Similarly, the learner needs to fill the table through membership queries.
}
\end{description}

The condition that a timed observation table $\mathbf{T}$ is reduced and prefix-closed is inherently preserved by the aforementioned operations, together with the counterexample processing described later in Sect.~\ref{sbsc:eq_ctx}. Furthermore, a table may need several rounds of these operations before being prepared (cf. Algorithm~\ref{alg:learning}), since certain conditions may be violated by different, interleaved operations.

\subsection{Hypothesis construction}\label{sbsc:hypo}
As soon as the timed observation table $\mathbf{T}$ is prepared, a hypothesis can be constructed in two steps, i.e., the learner first builds a DFA $\text{M}$ based on the information in $\mathbf{T}$, and then transforms $\text{M}$ to a hypothesis $\mathcal{H}$, which will later be shown as a COTA.

Given a prepared timed observation table $\mathbf{T} = (\Sigma, \bm{\Sigma}, \bm{\Sigma_r}, \bm{S}, \bm{R}, \bm{E}, f)$, a DFA $\text{M} = (Q_{M}, \Sigma_{M}, \Delta_{M}, q^0_M, F_{M})$ can be built as follows:
\begin{itemize}
	\item the finite set of locations $Q_{M} = \{q_{\mathit{row}(s)} \mid s\in\bm{S}\}$;
	\item the initial location $q^0_M = q_{\mathit{row}(\epsilon)}$ for $\epsilon \in \bm{S}$;
	\item the set of accepting locations $F_{M} = \{q_{\mathit{row}(s)} \mid f(s\cdot \epsilon) = + \text{ for } s\in \bm{S} \text{ and } \epsilon \in \bm{E} \}$;
	\item the finite alphabet $\Sigma_{M} = \{\bm{\sigma_r} \in \bm{\Sigma_r} \mid \gamma_r \cdot \bm{\sigma_r} \in \bm{S}\cup\bm{R} \text{ for } \gamma_r \in \bm{\Sigma_r}^{*}\}$;
	\item the finite set of transitions $\Delta_M = \{(q_{\mathit{row}(\gamma_r)}, \bm{\sigma_r}, q_{\mathit{row}(\gamma_r\cdot \bm{\sigma_r})}) \mid \gamma_r \cdot \bm{\sigma_r} \in \bm{S}\cup\bm{R} \text{ for } \gamma_r \in \bm{\Sigma_r}^{*} \text{ and } \bm{\sigma_r} \in \bm{\Sigma_r}\}$.
\end{itemize}

The constructed DFA $\text{M}$ is compatible with the timed observation table $\mathbf{T}$ in the sense captured by the following lemma. 

\begin{lemma}\label{lemma:M=T}
	For a prepared timed observation table $\mathbf{T} = (\Sigma, \bm{\Sigma}, \bm{\Sigma_r}, \bm{S}, \bm{R}, \bm{E}, f)$, for every $\gamma_r\cdot e \in (\bm{S}\cup\bm{R})\cdot\bm{E}$, the constructed DFA $\text{\rm M} = (Q_{M}, \Sigma_{M}, \Delta_{M}, q^0_M, F_{M})$ accepts $\pi(\Pi_{\{1,2\}}\gamma_r\cdot e)$
	 if and only if $f(\gamma_r\cdot e) = +$. 
\end{lemma}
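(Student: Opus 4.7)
My plan is to reduce the lemma to a generic \emph{tracking claim}: for every $\alpha\in\bm{S}\cup\bm{R}$, running $\text{\rm M}$ from $q^0_M$ on $\alpha$ terminates at the state $q_{\mathit{row}(\alpha)}$. Before invoking this I first verify that $\text{\rm M}$ is a well-formed DFA. Two points need care. First, whenever $\gamma_r,\gamma_r'\in\bm{S}\cup\bm{R}$ satisfy $\mathit{row}(\gamma_r)=\mathit{row}(\gamma_r')$ and both $\gamma_r\cdot\bm{\sigma_r},\gamma_r'\cdot\bm{\sigma_r}$ lie in $\bm{S}\cup\bm{R}$, the two transitions prescribed by $\Delta_M$ must lead to the same target state; this is consistency specialized to ${\bm{\sigma_r}}_1={\bm{\sigma_r}}_2=\bm{\sigma_r}$. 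Second, every state $q_{\mathit{row}(\gamma_r)}$ with $\gamma_r\in\bm{R}$ has to be identified with some $q_{\mathit{row}(s)}$, $s\in\bm{S}$: existence is closedness, uniqueness of the $\bm{S}$-representative is reducedness.

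The tracking claim itself is a straightforward induction on $|\alpha|$. The base $\alpha=\epsilon\in\bm{S}$ is immediate from $q^0_M=q_{\mathit{row}(\epsilon)}$. For the step, write $\alpha=\alpha'\cdot\bm{\sigma_r}$; prefix-closedness places $\alpha'\in\bm{S}\cup\bm{R}$, the hypothesis puts the run at $q_{\mathit{row}(\alpha')}$ after reading $\alpha'$, and the definition of $\Delta_M$ routes it to $q_{\mathit{row}(\alpha)}$.

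To apply the claim, I set $\alpha=\pi(\Pi_{\{1,2\}}\gamma_r\cdot e)$. When $\gamma_r\in\bm{S}$, evidence-closedness gives $\alpha\in\bm{S}\cup\bm{R}$ directly, so $\text{\rm M}$ terminates at $q_{\mathit{row}(\alpha)}$, which lies in $F_M$ iff $f(\alpha\cdot\epsilon)=+$; since $\pi$ preserves the logical-timed projection ($\Pi_{\{1,2\}}\alpha=\Pi_{\{1,2\}}\gamma_r\cdot e$), validity and the oracle answer coincide, giving $f(\alpha)=f(\gamma_r\cdot e)$. For $\gamma_r\in\bm{R}$, I pick $s\in\bm{S}$ with $\mathit{row}(s)=\mathit{row}(\gamma_r)$ (closedness) and run a sub-induction on $|e|$ showing that $\text{\rm M}$ on $\pi(\Pi_{\{1,2\}}\gamma_r\cdot e)$ reaches the same state as on $\pi(\Pi_{\{1,2\}}s\cdot e)$, using consistency in its full form (with possibly ${\bm{\sigma_r}}_1\ne{\bm{\sigma_r}}_2$ but $\Pi_{\{1,2\}}{\bm{\sigma_r}}_1=\Pi_{\{1,2\}}{\bm{\sigma_r}}_2$) at every step to keep the two runs' rows aligned. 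The $\bm{S}$-case then classifies this common terminal state, while $f(s\cdot e)=f(\gamma_r\cdot e)$ follows because $e\in\bm{E}$ and $\mathit{row}(s)=\mathit{row}(\gamma_r)$.

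The main obstacle is precisely this last maneuver. Because $\pi$ is prefix-dependent, the reset bits in $\pi(\Pi_{\{1,2\}}\gamma_r\cdot e)$ and $\pi(\Pi_{\{1,2\}}s\cdot e)$ may disagree at every position of their common logical suffix, so a naive symbol-by-symbol alignment of the two runs is not automatic; consistency, deliberately phrased to tolerate reset-bit differences as long as logical projections agree, is the exact tool that closes this gap and must be invoked once per position of $e$.
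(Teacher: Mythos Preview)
Your plan follows the paper's two-case split ($\gamma_r\in\bm{S}$ via evidence-closedness, $\gamma_r\in\bm{R}$ via closedness to some $s\in\bm{S}$ with the same row) and makes explicit the tracking claim that the paper leaves implicit. For the $\bm{S}$ case your argument is the paper's, only written out more carefully.

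For the $\bm{R}$ case you go further than the paper: you notice that $\pi(\Pi_{\{1,2\}}\gamma_r\cdot e)$ and $\pi(\Pi_{\{1,2\}}s\cdot e)$ may carry \emph{different} reset bits along the $e$-suffix, so acceptance of one by $\text{M}$ does not automatically transfer to the other, and you propose a sub-induction on the positions of $e$ that invokes consistency at every step to keep the two runs aligned. That diagnosis is right, but the sub-induction as stated does not go through. The consistency clause reads: if $\alpha,\alpha'\in\bm{S}\cup\bm{R}$ have equal rows, then $\alpha\cdot\bm{\sigma_r}$ and $\alpha'\cdot\bm{\sigma_r}'$ have equal rows \emph{provided both $\alpha\cdot\bm{\sigma_r}$ and $\alpha'\cdot\bm{\sigma_r}'$ lie in $\bm{S}\cup\bm{R}$}. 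At step $j$ of your induction the $s$-side prefix of $\pi(\Pi_{\{1,2\}}s\cdot e)$ is in $\bm{S}\cup\bm{R}$ by evidence- and prefix-closedness, but nothing forces the $\gamma_r$-side prefix of $\pi(\Pi_{\{1,2\}}\gamma_r\cdot e)$ to lie in $\bm{S}\cup\bm{R}$ --- evidence-closedness is stated only for prefixes $s\in\bm{S}$, not for $\gamma_r\in\bm{R}$. Hence consistency cannot be invoked, the $\gamma_r$-side row is undefined, and you also have no guarantee that $\Delta_M$ contains any transition on the letter $(\sigma_j,\mu_j,b_j)$ from the current state.

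In fairness, the paper's own proof writes only ``Thus, it is reduced to the first case'' and never explains why $\text{M}$'s behaviour on $\pi(\Pi_{\{1,2\}}s\cdot e)$ determines its behaviour on the \emph{different} word $\pi(\Pi_{\{1,2\}}\gamma_r\cdot e)$; it elides precisely the difficulty you tried to address. Your attempt is more honest about the obstacle, but as written it does not close it.
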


The learner then transforms the DFA $\text{M}$ to a hypothesis $\mathcal{H}=(\Sigma,Q,q_0,F,c,\Delta)$, with $Q = Q_M$, $q_0=q^0_M$, $F=F_M$, $c$ being the clock and $\Sigma$ the given alphabet as in $\mathbf{T}$. The set of transitions $\Delta$ in $\mathcal{H}$ can be constructed as follows: 
%
For any $q \in Q_{M}$ and $\sigma \in \Sigma$, let $\Psi_{q,\sigma} = \{ \mu \mid (q, (\sigma, \mu, b), q') \in \Delta_{M} \}$, 
then applying the partition function $P^c(\cdot)$ (defined below) to
  $\Psi_{q,\sigma}$ returns $k$ intervals, written as $I_1,\cdots, I_k$, satisfying $\mu_i \in I_i$ for any $1\leq i \leq k$, 
   where $k=|\Psi_{q,\sigma}|$; 
consequently, 
for every $(q, (\sigma, \mu_i, b_i), q') \in \Delta_{M}$, a fresh transition $\delta_i = (q,\sigma,I_i, b_i, q')$ is added to $\Delta$.


%
\begin{definition}[Partition function]\label{def:partition}
	Given a list of clock valuations $\ell =\mu_0,\mu_1,\cdots,\mu_n$ with 
	$0=\mu_0 < \mu_1 \cdots < \mu_n$, 
	and $\lfloor \mu_i \rfloor \neq \lfloor \mu_j \rfloor$ if $\mu_i,\mu_j\in\mathbb{R}_{\geq 0}\setminus\mathbb{N}$ and $i\neq j$ 
	for all $1\leq i, j\leq n$, let $\mu_{n+1} = 	 \infty$, then a partition function $P^c(\cdot)$ mapping $\ell$ to a set of intervals $\{I_0,I_1,\dots,I_n\}$, 
	which is a partition of $\mathbb{R}_{\geq 0}$, is defined as  
		\begin{equation*}
	I_i = \begin{cases}
	[\mu_i, \mu_{i+1}), & \text{if}\ \  \mu_i \in \mathbb{N} \wedge \mu_{i+1} \in \mathbb{N}; \\
	(\lfloor \mu_i \rfloor, \mu_{i+1}), & \text{if}\ \  \mu_i \in \mathbb{R}_{\geq 0}\setminus\mathbb{N} \wedge \mu_{i+1} \in \mathbb{N}; \\
	[\mu_i, \lfloor\mu_{i+1}\rfloor], & \text{if}\ \  \mu_i \in \mathbb{N} \wedge \mu_{i+1} \in \mathbb{R}_{\geq 0}\setminus\mathbb{N}; \\
	(\lfloor\mu_i\rfloor,\lfloor\mu_{i+1}\rfloor], & \text{if}\ \  \mu_i \in \mathbb{R}_{\geq 0}\setminus\mathbb{N} \wedge \mu_{i+1} \in \mathbb{R}_{\geq 0}\setminus\mathbb{N}.
	\end{cases}
	\end{equation*}
 \oomit{	\begin{itemize}
		\item $\bigcup_{0\leq i \leq n}{I_i} = [0,\infty)$;
		\item $I_i \cap I_j = \emptyset$ for $i\neq j$, $0\leq i, j \leq n$;
		\item $\mu_i \in I_i$ for $0\leq i \leq n$;
		\item for $0\leq i \leq n$, with $\mu_{n+1} = \infty$ and $\lfloor\mu_{n+1}\rfloor = \infty$,
		\begin{equation*}
		I_i = \begin{cases}
		[\mu_i, \mu_{i+1}), & \text{if}\ \  \mu_i \in \mathbb{N} \wedge \mu_{i+1} \in \mathbb{N}; \\
		(\lfloor \mu_i \rfloor, \mu_{i+1}), & \text{if}\ \  \mu_i \in \mathbb{R}_{\geq 0}\setminus\mathbb{N} \wedge \mu_{i+1} \in \mathbb{N}; \\
		[\mu_i, \lfloor\mu_{i+1}\rfloor], & \text{if}\ \  \mu_i \in \mathbb{N} \wedge \mu_{i+1} \in \mathbb{R}_{\geq 0}\setminus\mathbb{N}; \\
		(\lfloor\mu_i\rfloor,\lfloor\mu_{i+1}\rfloor], & \text{if}\ \  \mu_i \in \mathbb{R}_{\geq 0}\setminus\mathbb{N} \wedge \mu_{i+1} \in \mathbb{R}_{\geq 0}\setminus\mathbb{N}.
		\end{cases}
		\end{equation*}
	\end{itemize} }
\end{definition}
%
\begin{remark}
Definition~\ref{def:partition} is adapted from that in~\cite{Drews17} by imposing additional assumptions of the list of clock valuations in order to guarantee $\mu_i\in I_i$, for any $0\leq i\leq n$, due to the underlying continuous-time semantics. Whereas,
by $\mathbf{T}$ being prepared and
the normalization function described in Sect.~\ref{sbsc:eq_ctx}, the set of clock valuations $\Psi_{q,\sigma}$  can be arranged into a list $\ell_{q,\sigma} = \mu_0,\mu_1,\dots,\mu_n$ satisfying such assumptions given in Definition~\ref{def:partition}
for any $q\in Q_M$ and $\sigma\in \Sigma$. 
\end{remark}

\begin{example}\label{example:T_to_M_to_H}
Suppose $\mathbb{A}$ in Fig.~\ref{fig:dota_cota} recognizes the target timed language. Then the prepared table $\mathbf{T_5}$, the corresponding DFA $\text{M}_5$ and hypothesis $\mathcal{H}_5$ are depicted in Fig.~\ref{fig:T_to_M_to_H}. Here, the subscript $5$ indicates the fifth iteration of $\mathbf{T}$ (Details concerning the constructions and the entire learning process are enclosed in Appendix~\ref{appendix_learning_details}.
\end{example}
\begin{figure}[t]
	\hspace{.4cm}
	\begin{minipage}{0.2\linewidth}
		\begin{center}
			\resizebox{.9\textwidth}{!}{
				\begin{tabular}{r|c}
					\hline
					$\mathbf{T_5}$ & $\epsilon$ \\ 
					\hline
					$\epsilon$ & $-$ \\
					$(a,1.1,\bot)$ & $+$ \\
					\hline
					$(a,0,\top)$ & $-$ \\
					$(b,0,\top)$ & $-$ \\
					$(a,1.1,\bot)(a,0,\top)$ & $-$ \\
					$(a,1.1,\bot)(b,0,\top)$ & $-$ \\
					$(a,1.1,\bot)(b,2,\top)$ & $+$ \\
					$(a,3,\top)$ & $-$ \\
					\hline
				\end{tabular}
			}
		\end{center}			
	\end{minipage}
	\hspace{.2cm}
	\begin{minipage}{0.26\textwidth}
		\begin{center}
			\begin{tikzpicture}[->, >=stealth', shorten >=1pt, auto, node distance=3cm, semithick, scale=0.9, every node/.style={scale=0.7}]
			\node[initial,state] (0) {$q_{-}$};
			\node[accepting, state] (1) [right of = 0] {$q_{+}$};
			\path (0) edge [in= 145, out=110, loop] node[left] {$(a,0,\top)$} (0)
			(0) edge [in= 35, out=70, loop] node[above] {$(a,3,\top)$} (0)
			(0) edge [in= -145, out=-110, loop] node[left] {$(b,0,\top)$} (0)
			(0) edge node[above] {$(a,1.1,\bot)$} (1)
			(1) edge[loop above] node[above] {$(b,2,\top)$} (1)
			(1) edge[in=-30, out=-150] node[below] {$(a,0,\top)$} (0)
			(1) edge[in=-70, out=-110] node[below] {$(b,0,\top)$} (0);
			\node [below=25pt, align=flush center,text width=3cm] at (0) {$\text{M}_5$};
			\end{tikzpicture}
		\end{center}
	\end{minipage}
	\hspace{.8cm}
	\begin{minipage}{0.26\textwidth}
		\begin{center}
			\begin{tikzpicture}[->, >=stealth', shorten >=1pt, auto, node distance=3cm, semithick, scale=0.9, every node/.style={scale=0.7}]
			\node[initial,state] (0) {$q_{-}$};
			\node[accepting, state] (1) [right of = 0] {$q_{+}$};
			\path (0) edge [in= 145, out=110, loop] node[left] {$a,[0,1],\top$} (0)
			(0) edge [in= 35, out=70, loop] node[above] {$a,[3,\infty),\top$} (0)
			(0) edge [in= -145, out=-110, loop] node[left] {$b,[0,\infty),\top$} (0)
			(0) edge node[above] {$a,(1,3),\bot$} (1)
			(1) edge[loop above] node[above] {$b,[2,\infty),\top$} (1)
			(1) edge[in=-30, out=-150] node[below] {$a,[0,\infty),\top$} (0)
			(1) edge[in=-70, out=-110] node[below] {$b,[0,2),\top$} (0);
			\node [below=25pt, align=flush center,text width=3cm] at (0) {$\mathcal{H}_5$};
			\end{tikzpicture}
		\end{center}
	\end{minipage}
	\caption{The prepared timed observation table $\mathbf{T_5}$, the corresponding DFA $\text{M}_5$ and hypothesis $\mathcal{H}_5$.}
	\label{fig:T_to_M_to_H}
\end{figure}
\begin{lemma}\label{lemma:H=T}
	Given a DFA $\text{\rm M}= (Q_{M}, \Sigma_{M}, \delta_{M}, q^0_M, F_{M})$, which is generated from a prepared timed observation table $\mathbf{T}$, the hypothesis $\mathcal{H} = (\Sigma,Q,q_0,F,c,\Delta)$ is transformed from $\text{M}$. 
	For all $\gamma_r\cdot e \in (\bm{S}\cup\bm{R})\cdot\bm{E}$, 
	$\mathcal{H}$ accepts the reset-logical-timed word $\pi(\Pi_{\{1,2\}}\gamma_r \cdot e)$ iff $f(\gamma_r\cdot e) = +$. 
\end{lemma}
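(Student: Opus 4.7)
The plan is to reduce to Lemma~\ref{lemma:M=T} by showing that $\mathcal{H}$ and $\text{\rm M}$ agree on every reset-logical-timed word of the form $w = \pi(\Pi_{\{1,2\}}\gamma_r \cdot e)$, where $\gamma_r\cdot e \in (\bm{S}\cup\bm{R})\cdot\bm{E}$. Once this equivalence is established, the conclusion is immediate, since Lemma~\ref{lemma:M=T} already identifies $\text{\rm M}$'s acceptance of $w$ with the table entry $f(\gamma_r\cdot e)$.

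I would proceed by induction on the prefix length $i$ of $w=(\sigma_1,\mu_1,b_1)\cdots(\sigma_n,\mu_n,b_n)$, maintaining the invariant that after reading the first $i$ symbols both $\text{\rm M}$ and $\mathcal{H}$ are in the same location, and that the clock value of $\mathcal{H}$ at that point agrees with the value $\mu_{i+1}$ that the next symbol will exhibit. This last clause is consistent with the definition of logical-timed words via $\Gamma$, since $\mu_{i+1}$ equals the accumulated delay since the last reset. The base case $i=0$ holds because $q^0_M = q_{\mathit{row}(\epsilon)} = q_0$. For the inductive step, assume the invariant at $i$. By the construction of $\Delta_M$, reading $(\sigma_{i+1},\mu_{i+1},b_{i+1})$ from location $q$ in $\text{\rm M}$ proceeds via some transition $(q,(\sigma_{i+1},\mu_{i+1},b_{i+1}),q')$; by the construction of $\Delta$, the matching timed transition $(q,\sigma_{i+1},I,b_{i+1},q')$ is present in $\mathcal{H}$ with $\mu_{i+1}\in I$, a property enforced by the partition function $P^c(\cdot)$ (see Definition~\ref{def:partition}). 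Since the intervals produced by $P^c(\cdot)$ partition $\mathbb{R}_{\ge 0}$, this is the unique transition of $\mathcal{H}$ enabled by $\mu_{i+1}$. The reset flag $b_{i+1}$ then updates the clock of $\mathcal{H}$ to either $0$ or $\mu_{i+1}$, matching the clock bookkeeping recorded in $w$, which preserves the invariant. Acceptance in the last step agrees because $F_M = F$ is defined directly from $f(s\cdot\epsilon)$.

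The main technical obstacle is ensuring that every transition invoked in the induction actually exists in $\Delta_M$, i.e.\ that each intermediate prefix of $w$ indexes a reset-logical-timed word lying in $\bm{S}\cup\bm{R}$. For prefixes of $\gamma_r$ this is provided by prefix-closedness of $\bm{S}\cup\bm{R}$. For the steps that consume symbols of $e$, evidence-closedness places $\pi(\Pi_{\{1,2\}}s\cdot e)$ in $\bm{S}\cup\bm{R}$ whenever $s\in\bm{S}$; when $\gamma_r\in\bm{R}$ one first invokes closedness to pick $s\in\bm{S}$ with $\mathit{row}(s)=\mathit{row}(\gamma_r)$, then appeals to consistency to propagate the row equality along symbol-by-symbol extensions whose $\Pi_{\{1,2\}}$-projections coincide. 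Carefully chaining these five preparedness conditions so that each inductive step invokes only transitions already witnessed in the table is the most delicate bookkeeping of the argument; once it is in place, the correspondence between $\text{\rm M}$ and $\mathcal{H}$ yields the equivalence and hence the lemma by Lemma~\ref{lemma:M=T}.
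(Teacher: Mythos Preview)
Your approach is essentially the paper's: show that $\mathcal{H}$ and $\text{\rm M}$ agree on $w=\pi(\Pi_{\{1,2\}}\gamma_r\cdot e)$ via the transition-by-transition correspondence (each $(q,(\sigma,\mu,b),q')\in\Delta_M$ yielding $(q,\sigma,I,b,q')\in\Delta$ with $\mu\in I$ by the partition function), then invoke Lemma~\ref{lemma:M=T}. The paper's proof is a one-liner where you spell out an induction, but the content is the same. Your final paragraph, which re-derives the existence of $\text{\rm M}$'s run on $w$ from the five preparedness conditions, is redundant: that is exactly what the proof of Lemma~\ref{lemma:M=T} already establishes (it shows $w\in\bm{S}\cup\bm{R}$ via evidence-closedness and closedness, hence $\text{\rm M}$ has a complete run ending in an accepting or non-accepting state according to $f$), so you can simply cite Lemma~\ref{lemma:M=T} rather than reopen that bookkeeping.
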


\begin{theorem}\label{theorem:H_is_cota}
	The hypothesis $\mathcal{H}$ is a COTA.
\end{theorem}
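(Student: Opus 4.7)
The plan is to verify each requirement in the definition of a COTA, namely that $\mathcal{H}$ is a well-formed one-clock timed automaton that is both \emph{deterministic} (guards of same-labelled outgoing transitions are pairwise disjoint) and \emph{complete} (these guards cover $\mathbb{R}_{\geq 0}$). The components $\Sigma$, $Q=Q_M$, $q_0 = q^0_M$, $F=F_M$ and the single clock $c$ are well-defined by the very construction from the DFA $\text{M}$, so well-formedness reduces to checking that the transition set $\Delta$ is well-defined (i.e., for every pair $(q,\sigma)$ the intervals assigned as guards come from a genuine partition of $\mathbb{R}_{\geq 0}$, and each interval is tagged with a unique reset indicator and target location).

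The core observation is that, by construction, the guards of the outgoing transitions from $q$ labelled by $\sigma$ are exactly the intervals produced by applying the partition function $P^c(\cdot)$ to the clock-valuation list extracted from $\Psi_{q,\sigma}$. By Definition~\ref{def:partition}, if the input list satisfies the stated preconditions, then $P^c$ returns a partition of $\mathbb{R}_{\geq 0}$. Hence both determinism and completeness follow immediately once we verify that the preconditions are met for every $(q,\sigma)$ with $q\in Q$ and $\sigma\in\Sigma$. Concretely, I will check:
\begin{enumerate}
\item $0\in\Psi_{q,\sigma}$: this is guaranteed by the \emph{Making $\mathbf{T}$ closed} operation, which, whenever a word is moved from $\bm{R}$ to $\bm{S}$, introduces $\pi(\Pi_{\{1,2\}}s\cdot(\sigma,0))$ into $\bm{R}$ for every $\sigma\in\Sigma$; thus for every state representative of $Q_M$, the action $(\sigma,0,b)$ appears in $\Sigma_M$ and contributes a transition in $\Delta_M$.
\item The clock valuations in $\Psi_{q,\sigma}$ are pairwise distinct, and two non-integer valuations never share the same integer part: this is ensured by the normalization function used during counterexample processing (Sect.~\ref{sbsc:eq_ctx}), as already pointed out in the remark following Definition~\ref{def:partition}.
\end{enumerate}

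The only point that needs more than just invoking the preparedness of $\mathbf{T}$ is the well-definedness of the reset bit and the target location attached to each interval. Here I will use the \emph{consistency} and \emph{evidence-closedness} of $\mathbf{T}$: if two entries $(q,(\sigma,\mu,b),q_1')$ and $(q,(\sigma,\mu,b'),q_2')$ were to arise with different $(b,q')$ pieces for the same $(\sigma,\mu)$, they would originate from some $\gamma_r,\gamma_r'\in\bm{S}\cup\bm{R}$ with $\mathit{row}(\gamma_r)=\mathit{row}(\gamma_r')$ but differing continuations, contradicting consistency (and, after unfolding through Lemma~\ref{lemma:H=T}, contradicting the uniqueness of the run in the COTA $\mathbb{A}$ witnessed by $\pi$). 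Combining these pieces, every pair $(q,\sigma)$ receives exactly one transition per interval of a partition of $\mathbb{R}_{\geq 0}$; the resulting $\mathcal{H}$ is thus deterministic and complete, hence a COTA.

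The main obstacle is the precondition check for $P^c$, in particular arguing that items (1) and (2) above hold \emph{simultaneously} in all rounds in which a hypothesis is built, since several preparation operations interleave and the normalization is applied only inside counterexample processing. I would handle this by noting that after every complete preparation cycle of Algorithm~\ref{alg:learning} the table is closed, evidence-closed and normalized; building $\mathcal{H}$ only happens in such a state, so the preconditions of $P^c$ are in force when the construction is invoked, and the argument above goes through uniformly.
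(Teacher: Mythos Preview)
Your argument is correct and follows essentially the same route as the paper: establish that every pair $(q,\sigma)$ contributes the value $0$ to $\Psi_{q,\sigma}$ via the closing operation, and then invoke Definition~\ref{def:partition} to conclude that the resulting guards partition $\mathbb{R}_{\geq 0}$. The paper's own proof is only two sentences and leaves the remaining preconditions of $P^c$ (distinctness, no shared integer parts among non-integers) to the remark after Definition~\ref{def:partition}; you make this explicit, and you additionally raise the well-definedness of the reset bit and target per interval, which the paper does not discuss at all in this proof.
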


Given a clock valuation $\mu$, we denote the \emph{region} containing $\mu$ as $\llbracket \mu \rrbracket$, defined as $\llbracket \mu \rrbracket = [\mu,\mu]$ if $\mu\in\mathbb{N}$, and $\llbracket \mu \rrbracket = (\lfloor\mu\rfloor,\lfloor\mu\rfloor+1)$ otherwise. 
The following theorem establishes the compatibility of the constructed hypothesis $\mathcal{H}$ with the timed observation table $\mathbf{T}$.

\begin{theorem}\label{theorem:region_correct}
	For $\gamma_r \cdot e \in(\bm{S}\cup\bm{R})\cdot\bm{E}$, let $\pi(\Pi_{\{1,2\}} \gamma_r \cdot e) = (\sigma_1,\mu_1,b_1) \cdots (\sigma_n,\mu_n,b_n)$. Then for every $\mu_i'\in \llbracket \mu_i \rrbracket$, the hypothesis $\mathcal{H}$ accepts the reset-logical-timed word $\gamma_r'=(\sigma_1,\mu_1',b_1)\cdots(\sigma_n,\mu_n',b_n)$ if $f(\gamma_r \cdot e) = +$, and cannot accept it if $f(\gamma_r \cdot e) = -$. 
\end{theorem}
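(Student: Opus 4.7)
My plan is to prove the theorem by induction on the length $n$ of the reset-logical-timed word $\pi(\Pi_{\{1,2\}}\gamma_r\cdot e)$, lifting the acceptance statement of Lemma~\ref{lemma:H=T} (which handles exactly the representative $\mu_i'=\mu_i$) to every representative $\mu_i'\in\llbracket\mu_i\rrbracket$. The pivotal auxiliary observation is a \emph{region-closure} property of the partition function $P^c$: every interval $I$ produced by $P^c$ satisfies $\mu\in I \Rightarrow \llbracket\mu\rrbracket\subseteq I$. A direct case analysis of Definition~\ref{def:partition} gives this, since every endpoint of $I$ is integer-aligned---either an integer $\mu_i$ itself (in which case $\llbracket\mu_i\rrbracket$ collapses to a point) or of the form $\lfloor\mu_i\rfloor$---so no open region $(k,k+1)$ and no point region $\{k\}$ is ever split across an interval boundary. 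Consequently the family of guards installed in $\mathcal{H}$ at any $(q,\sigma)$ forms a region-respecting partition of $\mathbb{R}_{\ge 0}$.

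With this in hand the induction is routine. The base case $n=0$ is immediate since $\gamma_r'=\pi(\Pi_{\{1,2\}}\gamma_r\cdot e)=\epsilon$ and Lemma~\ref{lemma:H=T} applies directly. For the inductive step, I would simulate $\mathcal{H}$ in lock-step on $\pi(\Pi_{\{1,2\}}\gamma_r\cdot e)$ and $\gamma_r'$. Assume that both runs sit at the same location $q_{i-1}$ with local clock values in the same region after step $i-1$. By the construction of $\Delta$, the value $\mu_i$ lies in a unique guard interval $I$ at $(q_{i-1},\sigma_i)$; by region-closure $\mu_i'\in\llbracket\mu_i\rrbracket\subseteq I$, and by determinacy (Theorem~\ref{theorem:H_is_cota}) the same transition, carrying the same reset bit $b_i$, fires on both runs, landing them in the same $q_i$ with clock values again in a common region (both $0$ after a reset, or both in $\llbracket\mu_i\rrbracket$ otherwise). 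Iterating to $i=n$, the two runs share a final location, so $\mathcal{H}$ accepts $\gamma_r'$ iff it accepts $\pi(\Pi_{\{1,2\}}\gamma_r\cdot e)$, and Lemma~\ref{lemma:H=T} converts this into the claim $f(\gamma_r\cdot e)=+$, establishing both directions of the biconditional.

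The main obstacle I anticipate is the delay-validity bookkeeping: when $b_{i-1}=\bot$, $\gamma_r'$ corresponds to a genuine run of $\mathcal{H}$ only if the implied delay $\mu_i'-\mu_{i-1}'$ is non-negative. Since $\pi(\Pi_{\{1,2\}}\gamma_r\cdot e)$ comes from the teacher's oracle it is a valid reset-logical-timed word, so $\mu_{i-1}\le\mu_i$ on every no-reset segment; combined with the region-closure lemma and the normalization of clock valuations described in Sect.~\ref{sbsc:eq_ctx}, this ordering is preserved under any region-wise perturbation---one dispatches the point-region case $\llbracket\mu\rrbracket=\{k\}$ trivially and the open-region case $\llbracket\mu\rrbracket=(k,k+1)$ by noting that the normalization makes distinct non-integer valuations in the relevant partition lists fall in distinct regions. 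Once this side-condition is discharged, the inductive simulation above goes through without further complications.
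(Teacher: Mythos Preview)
Your proposal is correct and follows essentially the same approach as the paper. Both arguments hinge on the region-closure property of $P^c$ (i.e., $\mu\in I\Rightarrow\llbracket\mu\rrbracket\subseteq I$, which the paper states simply as $\llbracket\mu_i\rrbracket\subseteq\phi_i$), then simulate $\mathcal{H}$ in lock-step on $\pi(\Pi_{\{1,2\}}\gamma_r\cdot e)$ and $\gamma_r'$ to show the same transitions fire, and finally appeal to Lemma~\ref{lemma:H=T} together with Theorem~\ref{theorem:H_is_cota}; the paper presents this directly rather than as an explicit induction and handles the negative case by a one-line contradiction, but the structure is identical---your version is in fact more carefully articulated, particularly in flagging the delay-validity side-condition that the paper leaves implicit.
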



\subsection{Equivalence query and counterexample processing}\label{sbsc:eq_ctx}
%
Suppose that the teacher knows a COTA $\mathbb{A}$ which recognizes the target timed language $\mathcal{L}$. Then to answer an equivalence query is to determine whether $\mathcal{L}(\mathcal{H})=\mathcal{L}(\mathbb{A})$, which can be divided into two timed language inclusion problems, i.e., whether $\mathcal{L}(\mathcal{H})\subseteq\mathcal{L}(\mathbb{A})$ and $\mathcal{L}(\mathbb{A})\subseteq\mathcal{L}(\mathcal{H})$. Most decision procedures for language inclusion proceed by complementation and emptiness checking of the intersection~\cite{Hopcroft79}: $\mathcal{L}(A)\subseteq\mathcal{L}(B)$ iff $\mathcal{L}(A)\cap \overline{\mathcal{L}(B)} = \emptyset$. 
The fact that deterministic timed automata can be complemented~\cite{AlurM04} enables solving the inclusion problem by checking the emptiness of the resulted product automata $\mathcal{H}\times\overline{\mathbb{A}}$ and $\overline{\mathcal{H}}\times\mathbb{A}$. The complementation technique, however, does not apply to nondeterministic timed automata even if with only one single clock~\cite{Alur94}, which we plan to incorporate in our learning framework in future work.
We therefore opt for\footnote{Remark that the learning complexity (Sect.~\ref{sbsc:complexity}) is measured in terms of the number of queries rather than the time complexity of the specific method for checking the equivalence (nor membership). Additionally, the specific method of equivalence checking is not the main concern.} the alternative method presented by Ouaknine and Worrell in~\cite{Ouaknine04} showing that the language inclusion problem of timed automata with one clock (regardless of their determinacy) is decidable by reduction to a reachability problem on an infinite graph. 
That is, there exists a delay-timed word $\omega$ that leads to a \emph{bad configuration} if $\mathcal{L}(\mathcal{H})\nsubseteq\mathcal{L}(\mathbb{A})$. 
In detail, the corresponding run $\rho$ of $\omega$ ends in an accepting location in $\mathcal{H}$ but the counterpart $\rho'$ of $\omega$ in $\mathbb{A}$ is not accepting. Consequently, the teacher can provide the reset-delay-timed word $\omega_{r}$ resulted from $\omega$ as a negative counterexample $\mathit{ctx}_{-}$. Similarly, a positive counterexample $\mathit{ctx}_{+}=(\omega_{r},+)$ can be generated if $\mathcal{L}(\mathbb{A})\nsubseteq\mathcal{L}(\mathcal{H})$. An algorithm elaborating the equivalence query is provided in Appendix~\ref{appendix_algs}.

When receiving a counterexample $\mathit{ctx}=(\omega_{r},+/-)$, 
the learner first converts it to a reset-logical-timed word $\gamma_r= \Gamma(\omega_{r}) = (\sigma_1, \mu_1, b_1)(\sigma_2,\mu_2,b_2)\cdots(\sigma_n,\mu_n,b_n)$. 
By definition, $\gamma_r$ and $\omega_{r}$ share the same sequence of transitions in $\mathbb{A}$. Furthermore, by the contraposition of Theorem~\ref{theorem:L=>mathcalL}, $\gamma_r$ is an evidence for $L_r(\mathcal{H}) \neq L_r(\mathbb{A})$ if $\omega_{r}$ is an 
evidence for $\mathcal{L}(\mathcal{H}) \neq \mathcal{L}(\mathbb{A})$.

Additionally, by the definition of clock constraints $\Phi_c$, at any location, if an action $\sigma$ is enabled, i.e., its guard is satisfied, 
 w.r.t. the clock value $\mu \in \mathbb{R}_{\geq 0}\setminus\mathbb{N}$, then $\sigma$ should be enabled w.r.t. any clock value $\lfloor \mu \rfloor + \theta$ at the location, where $\theta \in (0,1)$. 
 Specifically, only one transition is available for $\sigma$ at the location on the interval  $\llbracket \mu \rrbracket$, because 
 the target automaton is deterministic. Therefore, in order to avoid unnecessarily distinguishing timed words and violating the assumptions of the list $\ell$ for the partition function, 
  the learner needs to apply a \emph{normalization function} $g$ to normalize $\gamma_r$. 

\begin{definition}[Normalization]
	A normalization function $g$ maps a reset-logical-timed word $\gamma_r= (\sigma_1, \mu_1, b_1)(\sigma_2,\mu_2,b_2)\cdots(\sigma_n,\mu_n,b_n)$ to another reset-logical-timed word by resetting any logical clock to its integer part plus a constant fractional part, i.e.,  $g(\gamma_r) = (\sigma_1, \mu_1', b_1)(\sigma_2,\mu_2',b_2)\cdots(\sigma_n,\mu_n',b_n)$, where $\mu_i' = \mu_i$ if $\mu_i \in \mathbb{N}$,  $\mu_i' = \lfloor \mu_i \rfloor + \theta$ for some fixed constant $\theta \in (0,1)$ otherwise. 
\end{definition}

We will instantiate $\theta = 0.1$ in what follows. Clearly our approach works for any other $\theta$ valued in $(0,1)$. 
This \emph{normalization} process guarantees the assumptions needed for Definition~\ref{def:partition}.

\oomit{The normalization function is used by the learner to resolve conflicts induced by the miss-distribution of reset-logical-timed actions in the generated DFA. That is, there may exist two actions $(\sigma,\mu,b)$ and $(\sigma,\mu',b')$ along two transitions from the same source location yet to different target locations, while sharing the same action $\sigma$ and different non-integer clock valuations $\mu \neq \mu'$ but $\lfloor \mu \rfloor = \lfloor \mu' \rfloor$. Such a miss-distribution blocks the learner from discovering the inconsistency of the underlying timed observation table $\mathbf{T}$, and furthermore violates the supposition of the partition function in Def.~\ref{def:partition}, which cannot be rectified. This is demonstrated by the following example.}

\begin{figure}[!t]
	\centering
	\hspace*{0.4cm}
	\begin{minipage}{0.52\textwidth}
		\centering
		\resizebox{\textwidth}{!}{
			\begin{tabular}{r|c}
				\hline
				$\mathbf{T_5}$ & $\epsilon$ \\ 
				\hline
				$\epsilon$ & $-$ \\
				$(a,1.1,\bot)$ & $+$ \\
				\hline
				$(a,0,\top)$ & $-$ \\
				$(b,0,\top)$ & $-$ \\
				$(a,1.1,\bot)(a,0,\top)$ & $-$ \\
				$(a,1.1,\bot)(b,0,\top)$ & $-$ \\
				$(a,1.1,\bot)(b,2,\top)$ & $+$ \\
				$(a,3,\top)$ & $-$ \\
				\hline
			\end{tabular}
			{\fontsize{15pt}{\linewidth}$\xLongrightarrow[\gamma_r=(a,0,\top)(a,1.3,\top)]{\omega_{r}=(a,0,\top)(a,1.3,\top), -}$}
			\begin{tabular}{r|c}
				\hline
				$\mathbf{T_6}$ & $\epsilon$ \\ 
				\hline
				$\epsilon$ & $-$ \\
				$(a,1.1,\bot)$ & $+$ \\
				\hline
				$(a,0,\top)$ & $-$ \\
				$(b,0,\top)$ & $-$ \\
				$(a,1.1,\bot)(a,0,\top)$ & $-$ \\
				$(a,1.1,\bot)(b,0,\top)$ & $-$ \\
				$(a,1.1,\bot)(b,2,\top)$ & $+$ \\
				$(a,3,\top)$ & $-$ \\
				$(a,0,\top)(a,1.3,\top)$ & $-$ \\
				\hline
			\end{tabular}
		}
	\end{minipage}
	\hfill
	\begin{minipage}{0.43\textwidth}
		\centering
        \hspace*{0.3cm}
		\resizebox{0.75\textwidth}{!}{
			\begin{tikzpicture}[->, >=stealth', shorten >=1pt, auto, node distance=3cm, semithick, scale=0.9, every node/.style={scale=0.9}]
			\node[initial,state] (0) {$q_{-}$};
			\node[accepting, state] (1) [right of = 0] {$q_{+}$};
			\path (0) edge [in= 160, out=125, loop] node[left] {$(a,0,\top)$} (0)
			(0) edge [in= 110, out=75, loop] node[above] {$(a,3,\top)$} (0)
			(0) edge [in= 65, out=30, loop] node[right] {$(a,1.3,\top)$} (0)
			(0) edge [in= -145, out=-110, loop] node[left] {$(b,0,\top)$} (0)
			(0) edge node[above] {$(a,1.1,\bot)$} (1)
			(1) edge[loop above] node[above] {$(b,2,\top)$} (1)
			(1) edge[in=-30, out=-150] node[below] {$(a,0,\top)$} (0)
			(1) edge[in=-70, out=-110] node[below] {$(b,0,\top)$} (0);
			\node [below=30pt, align=flush center,text width=3cm] at (0) {$\text{M}_6$};
			\end{tikzpicture}
		}
	\end{minipage}
	\caption{An illustration of the necessity of normalization by the normalization function.}
	\label{fig:wrong_ctx}
\end{figure}
\begin{example}\label{example:wrong_ctx}
	Consider the prepared table $\mathbf{T}_5$ in Fig.~\ref{fig:wrong_ctx} (as in Fig.~\ref{fig:T_to_M_to_H}). When the leaner asks 
	an equivalence query with hypothesis $\mathcal{H}_5$, the teacher answers that $\mathcal{L}(\mathcal{H}_5) \neq \mathcal{L}(\mathbb{A})$, where $\mathbb{A}$ in Fig.~\ref{fig:dota_cota} is the target automaton, and  provides a counterexample $(\omega_{r}, -)$ with $\omega_{r} =(a,0,\top)(a,1.3,\top)$, which can be transformed to a reset-logical-timed word $\gamma_r=(a,0,\top)(a,1.3,\top)$. If he adds prefixes of $\gamma_r$ to the table directly, the learner will get a prepared table $\mathbf{T}_6$ and thus construct a DFA $\text{M}_6$. Unfortunately, 
	the partition function defined in  Definition~\ref{def:partition} is not applicable to 
	 $(a,1.3,\top)$ and $(a,1.1,\bot)$ any more.  
	 On the other hand, if he adds the prefixes of the normalized reset-logical-timed word, i.e., $\gamma_r'=(a,0,\top)(a,1.1,\top)$, to $\mathbf{T}_5$, the learner will then get an inconsistent table whose consistency can be retrieved by the operation of ``making $\mathbf{T}$ consistent'' as expected.
\end{example}

The following theorem guarantees that the normalized reset-logical-timed word $\gamma_r'$ is also  an evidence for $L_r(\mathcal{H}) \neq L_r(\mathbb{A})$. Therefore, the learner can use it as a counterexample and thus adds all the prefixes of $\gamma_r'$ to $\bm{R}$ except those already in $\bm{S} \cup \bm{R}$.

\begin{theorem}\label{theorem:refine_correctness}
	Given a valid reset-logical-timed word $\gamma_r$ of $\mathbb{A}$, its normalization $\gamma_r'=g(\gamma_r)$
	shares the same sequence of transitions in $\mathbb{A}$.
\end{theorem}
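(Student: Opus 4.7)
\bigskip

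\noindent\textbf{Proof plan.} The plan is to induct on the prefix length $i$, carrying along the invariant that the normalized word $\gamma_r'$ drives $\mathbb{A}$ through exactly the same sequence of transitions $\delta_1,\dots,\delta_i$ as the original $\gamma_r$, ending in the same location, and moreover that after each step the logical clock value on the normalized run lies in the same region as on the original run. The base case $i=0$ is trivial: both runs start in $q_0$ with clock value $0$, and the region of $0$ is $\{0\}$.

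For the inductive step, suppose the invariant holds through step $i-1$. The $i$-th action in $\gamma_r$ fires when the logical clock reads $\mu_i$; in the normalized word the reading is $\mu_i'$. By definition of $g$, either $\mu_i\in\mathbb{N}$ and $\mu_i'=\mu_i$, or $\mu_i\notin\mathbb{N}$ and $\mu_i'=\lfloor\mu_i\rfloor+\theta\in(\lfloor\mu_i\rfloor,\lfloor\mu_i\rfloor+1)$; in both cases $\mu_i'\in\llbracket\mu_i\rrbracket$. The essential observation is that every guard in $\mathbb{A}$ (being built from clock constraints in $\Phi_c$) is an interval with endpoints in $\mathbb{N}\cup\{\infty\}$, so any such interval either contains the whole region $\llbracket\mu_i\rrbracket$ or is disjoint from it. Hence $\mu_i'$ satisfies exactly the same guards as $\mu_i$. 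Since $\mathbb{A}$ is a complete DOTA, there is a \emph{unique} transition outgoing from the current location labelled by $\sigma_i$ whose guard is satisfied, namely $\delta_i$; so the normalized run must fire $\delta_i$ as well. The reset flag $b_i$ recorded in $\gamma_r'$ coincides with that of $\delta_i$, so the invariant on locations is preserved; and the region invariant is preserved because after the transition the clock is either $0$ (if $b_i=\top$) or still equal to $\mu_i'\in\llbracket\mu_i\rrbracket$ (if $b_i=\bot$).

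The one routine check needed to make the argument airtight is that $\gamma_r'$ actually arises from some legitimate run, i.e., that the required delays $\mu_i'-\mu_{i-1}'$ (whenever $b_{i-1}=\bot$) are non-negative. This follows from a small case analysis on whether $\mu_{i-1}$ and $\mu_i$ are integers, using $\theta\in(0,1)$: in each of the four cases one verifies that $\mu_i\geq\mu_{i-1}$ implies $\mu_i'\geq\mu_{i-1}'$. For instance, if $\mu_{i-1}$ is a non-integer and $\mu_i$ is an integer, then $\mu_i\geq\lfloor\mu_{i-1}\rfloor+1$, so $\mu_i'=\mu_i\geq\lfloor\mu_{i-1}\rfloor+1>\lfloor\mu_{i-1}\rfloor+\theta=\mu_{i-1}'$.

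I do not expect a real obstacle here; the heavy lifting is done by the integer-endpoint property of the guards together with determinism and completeness of $\mathbb{A}$, which together guarantee that guard satisfaction is constant on each region $\llbracket\mu\rrbracket$ and that exactly one enabled transition exists at each step. The only place that requires any care is the delay-non-negativity check, which is the one mildly technical point worth spelling out explicitly in the write-up.
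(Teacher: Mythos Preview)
Your proposal is correct and follows essentially the same reasoning as the paper's proof: both arguments rest on the observation that $\mu_i'\in\llbracket\mu_i\rrbracket$, which together with the integer-endpoint structure of the guards in $\Phi_c$ and the determinism of $\mathbb{A}$ forces the same transition $\delta_i$ to fire at each step. The paper's version is terser (no explicit induction) and omits your delay-non-negativity check; that check is a worthwhile addition, since it confirms that $\gamma_r'$ is itself a valid reset-logical-timed word rather than merely a sequence of symbols.
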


\subsection{Learning algorithm}\label{sbsc:algorithm}

\begin{algorithm}[!t]
	\caption{Learning one-clock timed automaton with a smart teacher}
	\label{alg:learning}
	\SetKwInOut{Input}{input}
	\SetKwInOut{Output}{output}
	\Input{the timed observation table $\mathbf{T} = (\Sigma, \bm{\Sigma}, \bm{\Sigma_r}, \bm{S}, \bm{R}, \bm{E}, f)$.}
	\Output{the hypothesis $\mathcal{H}$ recognizing the target language $\mathcal{L}$.}
	$\bm{S}\leftarrow\{\epsilon\}$;
	$\bm{R}\leftarrow\{\Gamma(\omega) \mid \omega=(\sigma,0), \forall \sigma \in \Sigma \}$;
	$\bm{E}\leftarrow\{\epsilon\}$ \tcp*{initialization}
	fill $\mathbf{T}$ by membership queries\;
	$\mathit{equivalent}$ $\leftarrow$ $\bot$\;
	\While{$\mathit{equivalent}$ = $\bot$}{
		$prepared$ $\leftarrow$ is\_prepared($\mathbf{T}$) \tcp*{whether the table is prepared}
		\While{$prepared$ = $\bot$}
		{
			\lIf{$\mathbf{T}$ is not closed}{
				$\!$make\_closed($\mathbf{T}$)
			}
			\lIf{$\mathbf{T}$ is not consistent}{
				$\!$make\_consistent($\mathbf{T}$)
			}
			\lIf{$\mathbf{T}$ is not evidence-closed}{
				$\!$make\_evidence\_closed($\mathbf{T}$)
			}
			$prepared$ $\leftarrow$ is\_prepared($\mathbf{T}$)\;
		}
		$\text{M} \leftarrow$ build\_DFA($\mathbf{T}$) \tcp*{transforming $\mathbf{T}$ to a DFA $\text{M}$}
		$\mathcal{H} \leftarrow$ build\_hypothesis($\text{M}$) \tcp*{constructing a hypothesis $\mathcal{H}$ from $\text{M}$}
		$\mathit{equivalent}$, $\mathit{ctx}$ $\leftarrow$ equivalence\_query($\mathcal{H}$)\;
		\If{$\mathit{equivalent}$ = $\bot$}{
			ctx\_processing($\mathbf{T}$, $\mathit{ctx}$) \tcp*{counterexample processing}
		}
	}
	\Return $\mathcal{H}$\;
\end{algorithm}

We present in Algorithm~\ref{alg:learning} the learning procedure integrating all the previously stated ingredients, including preparing the table, membership and equivalence queries, hypothesis construction and counterexample processing. The learner first initializes the timed observation table $\mathbf{T}=(\Sigma, \bm{\Sigma}, \bm{\Sigma_r}, \bm{S}, \bm{R}, \bm{E}, f)$, where $\bm{S}=\{\epsilon\}$, $\bm{E}=\{\epsilon\}$, while for every $\sigma \in \Sigma$, he builds a logical-timed word $\gamma=(\sigma,0)$ and then obtains its reset counterpart  $\pi(\gamma)=(\sigma,0,b)$ by triggering a membership query to the teacher, which is then added to $\bm{R}$. Thereafter, the learner can fill the table by additional membership queries. Before constructing a hypothesis, the learner performs several rounds of operations described in Sect.~\ref{sbsc:membership} until $\mathbf{T}$ is prepared. Then, a hypothesis $\mathcal{H}$ is constructed leveraging an intermediate DFA $\text{M}$ and submitted to the teacher for an equivalence query. If the answer is positive, $\mathcal{H}$ recognizes the target language. Otherwise, the learner receives a counterexample $\mathit{ctx}$ and then conducts the counterexample processing to update $\mathbf{T}$ as described in Sect.~\ref{sbsc:eq_ctx}. The whole procedure repeats until the teacher gives a positive answer to an equivalence query.

To facilitate the analysis of correctness, termination and complexity of Algorithm~\ref{alg:learning}, we introduce the notion of \emph{symbolic state} that combines each location with its clock regions: a symbolic state of a COTA $\mathbb{A} = (\Sigma,Q,q_0,F,c,\Delta)$ is a pair $(q,\llbracket \mu \rrbracket)$, where $q\in Q$ and $\llbracket \mu \rrbracket$ is a region containing $\mu$. If $\kappa$ is the maximal constant appearing in the clock constraints of $\mathbb{A}$, then there exist $2\kappa+2$ such regions, including $[n,n]$ with $0\leq n\leq \kappa$, $(n,n+1)$ with $0\leq n< \kappa$, and $(\kappa, \infty)$ for each location, so there are a total of $\lvert Q \rvert \times (2\kappa+2)$ symbolic states. Then the correctness and termination of Algorithm~\ref{alg:learning} is stated in the following theorem, based on the fact that there is an injection from $\bm{S}$ (or equivalently, the locations of $\mathcal{H}$) to symbolic states of $\mathbb{A}$.
%

\begin{theorem} \label{theorem:termination}
  Algorithm~\ref{alg:learning} terminates and returns a COTA
  $\mathcal{H}$ which recognizes the target timed language
  $\mathcal{L}$.
\end{theorem}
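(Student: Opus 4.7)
The plan is to first establish correctness from the exit condition of the outer loop, and then argue termination by bounding the size of the observation table via the injection noted in the excerpt. Correctness is almost immediate: the only way Algorithm~\ref{alg:learning} exits its outer while-loop is when $\mathit{equivalent} = \top$, which is set exactly when the smart teacher confirms $\mathcal{L}(\mathcal{H}) = \mathcal{L}(\mathbb{A}) = \mathcal{L}$; combined with Theorem~\ref{theorem:H_is_cota}, which guarantees that $\mathcal{H}$ is a COTA, this gives the required conclusion.

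For termination, I would first establish the injection claim. Define $\iota : \bm{S} \to Q_{\mathbb{A}} \times \mathcal{R}$, with $\mathcal{R}$ the set of $2\kappa+2$ clock regions of $\mathbb{A}$, by sending $s \in \bm{S}$ to the symbolic state of $\mathbb{A}$ reached after executing $s$. If two distinct $s, s' \in \bm{S}$ were to map to the same symbolic state, then by Theorem~\ref{theorem:region_correct} together with the evidence-closed condition, the words $\pi(\Pi_{\{1,2\}} s \cdot e)$ and $\pi(\Pi_{\{1,2\}} s' \cdot e)$ would be accepted identically by $\mathbb{A}$ for every $e \in \bm{E}$, forcing $\mathit{row}(s) = \mathit{row}(s')$ and contradicting the reduced property; hence $|\bm{S}| \le |Q_{\mathbb{A}}| \cdot (2\kappa+2)$. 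Given this bound, the inner preparation loop terminates, since make\_closed strictly increases $|\bm{S}|$; make\_consistent strictly increases $|\bm{E}|$ (itself bounded by $2^{|\bm{S}|}$, as $\mathit{row}$ vectors are Boolean and pairwise distinct on $\bm{S}$); and make\_evidence\_closed strictly increases $|\bm{R}|$ in a manner controlled by $|\bm{S}| \cdot |\Sigma|$ plus contributions from the finitely many suffixes. For the outer loop, each counterexample, after normalization via Theorem~\ref{theorem:refine_correctness}, injects behaviour not captured by the current $\mathcal{H}$, and the subsequent preparation round must force a strict growth of $|\bm{S}|$; since $|\bm{S}|$ is bounded, only finitely many equivalence queries can fail.

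The main obstacle is rigorously proving the injection and ensuring that counterexample processing makes strict progress on $|\bm{S}|$. The injection argument hinges on combining the reduced, evidence-closed, and prefix-closed conditions with Theorem~\ref{theorem:region_correct}, so that any symbolic-state collision in $\mathbb{A}$ propagates to a $\mathit{row}$ collision in the table. The progress argument requires checking that if the current hypothesis disagrees with $\mathbb{A}$ on a counterexample $\omega_r$, then after inserting the prefixes of the normalized reset-logical-timed word into $\bm{R}$ and filling via membership queries, at least one newly created row is distinguishable from every existing row in $\bm{S}$, so that the ensuing make\_closed call indeed enlarges $\bm{S}$ rather than merely padding $\bm{R}$.
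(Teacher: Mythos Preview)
Your correctness argument and the injection $\iota\colon\bm{S}\to Q_{\mathbb{A}}\times\mathcal{R}$ match the paper's proof essentially verbatim. The gap is in your progress measure for the outer loop. You claim that after processing a counterexample the ensuing preparation round \emph{must} strictly enlarge $\bm{S}$; this is false. In the running example (Appendix~\ref{appendix_learning_details}), the passage $\mathbf{T}_3\to\mathbf{T}_4$, $\mathbf{T}_4\to\mathbf{T}_5$, and $\mathbf{T}_9\to\mathbf{T}_{10}$ each absorb a counterexample whose normalized prefixes land in $\bm{R}$ with rows already present in $\bm{S}$: the table is immediately prepared again with $\bm{S}$ unchanged. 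What these counterexamples do instead is enlarge some $\Psi_{q,\sigma}$, i.e.\ split an interval in a partition, thereby adding a transition to $\mathcal{H}$ without adding a location.

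The paper's termination argument uses a different, two-pronged progress measure: each outer iteration either adds an element to $\bm{S}$ \emph{or} strictly refines one of the partitions $P^c(\Psi_{q,\sigma})$ (equivalently, increases the number of transitions of $\mathcal{H}$). The first is bounded by $|Q|\cdot(2\kappa+2)$ via the injection; the second is bounded because the endpoints introduced by normalization lie in a finite set of regions, so each $(q,\sigma)$ admits only finitely many splits (the complexity section bounds the transition count by $mn^2(2\kappa+2)^3$). To repair your argument you need to drop the claim that $|\bm{S}|$ alone is the progress measure and instead track the pair $(|\bm{S}|,\text{number of transitions of }\mathcal{H})$ lexicographically, or simply their sum.
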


\subsection{Complexity}\label{sbsc:complexity}
Given a target timed language $\mathcal{L}$ which is recognized by a
COTA $\mathbb{A}$, let $n = \lvert Q \rvert$ be the number of locations of
$\mathbb{A}$, $m = \lvert \Sigma \rvert$ the size of the alphabet, and $\kappa$ the
maximal constant appearing in the clock constraints of $\mathbb{A}$. In what follows, we derive the complexity of Algorithm~\ref{alg:learning} in terms of the number of queries.

By the proof of Theorem~\ref{theorem:termination}, $\mathcal{H}$  has at most $n(2\kappa+2)$
locations ( the size of $\bm{S}$) distinguished by $\bm{E}$. Thus,  $|\bm{E}|$ is
 at most $n(2\kappa+2)$
in order to distinguish these locations.
Therefore, the number of transitions of $\mathcal{H}$
is bounded by $mn^2(2\kappa+2)^3$. Furthermore, as every counterexample adds at least one fresh transition to the hypothesis
$\mathcal{H}$, where we consider each interval of the partition corresponds to 
a transition, this means that the number
of counterexamples and equivalence queries is at most $mn^2(2\kappa+2)^3$.

Now, we consider the number of membership queries, that is, to compute $(\lvert\bm{S}\rvert+\lvert\bm{R}\rvert)\times
\lvert\bm{E}\rvert$. Let $h$ be the
maximal length of counterexamples returned by the teacher, which is polynomial in the size of $\mathbb{A}$ according to 
 Theorem 5 in~\cite{VerwerWW11}, bounded by $n^2$. There are three cases of extending $\bm{R}$ by adding fresh rows, namely during the processing of counterexamples, making $\mathbf{T}$ closed, and making $\mathbf{T}$ evidence-closed. The first case adds at
most $hmn^2(2\kappa+2)^3$ rows to $\bm{R}$, while the latter two add at most $n(2\kappa+2)\times m$ and $n^2(2\kappa+2)^2$, respectively, yielding that the size of $\bm{R}$
is bounded by $\mathcal{O}(hmn^2\kappa^3)$, where $\mathcal{O}(\cdot)$ is the big Omicron notation. As a consequence, the number of membership queries is bounded by
$\mathcal{O}(mn^5\kappa^4)$.
So, the total complexity is $\mathcal{O}(mn^5\kappa^4)$. 

It is worth noting the above analysis is given in the worst case, where all partitions need to be fully refined. But, in practice we can learn the automaton without refining most partitions, and therefore the number of equivalence and membership queries, as well as the number of locations in the learned automaton are much fewer than the corresponding worst-case bounds. This will be demonstrated by examples in Sect.~\ref{sec:experiments}.

\subsection{Accelerating Trick}\label{sbsc:accelerating}
In the timed observation table, the function $f$ maps invalid reset-logical-timed words as well as certain valid ones to ``$-$'' when the teacher maintains a COTA $\mathbb{A}$ as the oracle. The learner thus needs multiple rounds of queries to distinguish the ``sink'' location from other unaccepting locations. 
If the function $f$ is extended to map invalid reset-logical-timed words to a distinct symbol, say ``$\times$'', when we let a DOTA $\mathcal{A}$ be the oracle, then the learner will take much fewer queries. We will later show in the experiments that such a trick significantly accelerates the learning process.

\section{Learning from a Normal Teacher}\label{sec:normalteacher}

In this section, we consider the problem of learning timed automata
with a normal teacher. As before, we assume the timed language to be
learned comes from a complete DOTA. For the normal teacher, inputs to
membership queries are delay-timed words, and the teacher returns
whether the word is in the language (without giving any additional
information). Inputs to equivalence queries are candidate DOTAs, and
the teacher either answers they are equivalent or provides a
delay-timed word as a counterexample.

The algorithm here is based on the procedure in the previous section.
We still maintain observation tables where the elements in $\bm{S}\cup\bm{R}$ are reset-logical-timed words and
the elements in $\bm{E}$ are logical-timed words. In order to obtain delay-timed words
for the membership queries, we need to \emph{guess} clock reset
information for transitions in the table. More precisely, in order to
convert a logical-timed word to a delay-timed word, it is necessary to
know clock reset information for all but the last transition. Hence,
it is necessary to guess reset information for each word in
$\bm{S}\cup \bm{R}$ (since $\bm{S}\cup \bm{R}$ is prefix-closed, this
is equivalent to guessing reset information for the last transition of
each word). Also, for each entry in
$(\bm{S}\cup \bm{R})\times \bm{E}$, it is necessary to guess all but
the last transition in $\bm{E}$. The algorithm can be thought of as
exploring a search tree, where branching is caused by guesses, and
successor nodes are constructed by the usual operations of preparing a
table and dealing with a counterexample.

\begin{algorithm}[!t]
	\caption{Learning one-clock timed automaton with a normal teacher}
	\label{alg:learning_normal}
	\SetKwInOut{Input}{input}
	\SetKwInOut{Output}{output}
	\Input{the timed observation table $\mathbf{T} = (\Sigma, \bm{\Sigma}, \bm{\Sigma_r}, \bm{S}, \bm{R}, \bm{E}, f)$.}
	\Output{the hypothesis $\mathcal{H}$ recognizing the target language $\mathcal{L}$.}
	$\mathit{ToExplore}\leftarrow\emptyset$;
	$\bm{S}\leftarrow\{\epsilon\}$;
	$\bm{R}\leftarrow\{\pi(\Gamma(\omega)) \mid \omega=(\sigma,0), \forall \sigma \in \Sigma \}$;
	$\bm{E}\leftarrow\{\epsilon\}$\; 
	$\mathit{currentTable} \leftarrow (\Sigma, \bm{\Sigma}, \bm{\Sigma_r}, \bm{S}, \bm{R}, \bm{E}, f)$\;
	$\mathit{tables}$ $\leftarrow$ guess\_and\_fill($\mathit{currentTable}$)\tcp*{guess resets and fill all table instances}
	$\mathit{ToExplore}$.insert($\mathit{tables}$)\tcp*{insert table instances $\mathit{tables}$ into $\mathit{ToExplore}$} 
	
	$\mathit{currentTable}$ $\leftarrow$ $\mathit{ToExplore}$.pop()\tcp*{pop out head instance as the current table}
	$\mathit{equivalent}$ $\leftarrow$ $\bot$\;
	\While{$\mathit{equivalent}$ = $\bot$}{
		$prepared$ $\leftarrow$ is\_prepared($\mathit{currentTable}$)\tcp*{whether the current table is prepared}
		\While{$prepared$ = $\bot$}
		{
			\If{$\mathit{currentTable}$ is not closed}{
				$\mathit{tables}$ $\leftarrow$ guess\_and\_make\_closed($\mathit{currentTable}$);
				$\mathit{ToExplore}$.insert($\mathit{tables}$)\;
				$\mathit{currentTable}$ $\leftarrow$ $\mathit{ToExplore}$.pop()\; 
			}
			\If{$\mathit{currentTable}$ is not consistent}{
				$\mathit{tables}$ $\leftarrow$ guess\_and\_make\_consistent($\mathit{currentTable}$);
				$\mathit{ToExplore}$.insert($\mathit{tables}$)\;
				$\mathit{currentTable}$ $\leftarrow$ $\mathit{ToExplore}$.pop()\; 
			}
			\If{$\mathit{currentTable}$ is not evidence-closed}{
				$\mathit{tables}$ $\leftarrow$ guess\_and\_make\_evidence\_closed($\mathit{currentTable}$);
				$\mathit{ToExplore}$.insert($\mathit{tables}$)\;
				$\mathit{currentTable}$ $\leftarrow$ $\mathit{ToExplore}$.pop()\;
			}
			$prepared$ $\leftarrow$ is\_prepared($\mathit{currentTable}$)\;
		}
		$\text{M} \leftarrow$ build\_DFA($\mathit{currentTable}$) \tcp*{transforming $\mathit{currentTable}$ to a DFA $\text{M}$}
		$\mathcal{H} \leftarrow$ build\_hypothesis($\text{M}$) \tcp*{constructing a hypothesis $\mathcal{H}$ from $\text{M}$}
		$\mathit{equivalent}$, $\mathit{ctx}$ $\leftarrow$ equivalence\_query($\mathcal{H}$)\tcp*{$\mathit{ctx}$ is a delay-timed word}
		\If{$\mathit{equivalent}$ = $\bot$}{
			$\mathit{tables}$ $\leftarrow$ guess\_and\_ctx\_processing($\mathit{currentTable}$, $\mathit{ctx}$) \tcp*{counterexample processing}
			$\mathit{ToExplore}$.insert($\mathit{tables}$)\;
			$\mathit{currentTable}$ $\leftarrow$ $\mathit{ToExplore}$.pop()\;
		}
	}
	\Return $\mathcal{H}$\;
\end{algorithm}

The detailed process is given in Algorithm
\ref{alg:learning_normal}. The learner maintains a set of table
instances, named $\mathit{ToExplore}$, which contains all table
instances that need to be explored.

The initial tables in $\mathit{ToExplore}$ are as follows. Each table
has $\bm{S}=\bm{E}=\{\epsilon\}$. For each $\sigma\in\Sigma$, there is
one row in $\bm{R}$ corresponding to the logical-timed word
$\omega=(\sigma, 0)$. It is necessary to guess a reset $b$ for each
$\omega$ thereby transforming it to a reset-logical-timed word
$\gamma_r=(\sigma, 0, b)$. There are $2^{\lvert \Sigma \rvert}$
possible combinations of guesses. These tables are filled by making
membership queries (in this case, the membership queries for each
table are the same). The resulting $2^{\lvert \Sigma \rvert}$ tables
form the initial tables in $\mathit{ToExplore}$.

In each iteration of the algorithm, one table instance is taken out of
$\mathit{ToExplore}$. The learner checks whether the table is closed,
consistent, and evidence closed. If the table is not closed,
i.e. there exists $r\in\bm{R}$ such that $row(r)\neq row(s)$ for all
$s\in \bm{S}$, the learner moves $r$ from $\bm{R}$ to $\bm{S}$. Then
for each $\sigma\in \Sigma$, the element $r\cdot (\sigma,0)$ is added
to $\bm{R}$, and a guess has to be made for its reset
information. Hence, $2^{\lvert \Sigma \rvert}$ unfilled table
instances will be generated. Next, for each entry in the
$\lvert\Sigma\rvert$ new rows of $\bm{R}$, it is necessary to guess
reset information for all but the last transition in
$e\in\bm{E}$. After this guess, it is now possible to fill the table
instances by making membership queries with transformed delay-timed words. Hence, there are at most
$2^{(\sum_{e_i \in \bm{E}\setminus\{\epsilon\}}{(\lvert e_i \rvert -
    1)}) \times \lvert \Sigma \rvert}$
filled table instances for one unfilled table instance. All filled
table instances are inserted into $\mathit{ToExplore}$.

If the table is not consistent, i.e. there exist some
$\gamma_r,\gamma_r'\in\bm{S}\cup\bm{R}$ and
$\bm{\sigma_r}\in\bm{\Sigma_r}$ such that
$\gamma_r\cdot\bm{\sigma_r},\gamma_r'\cdot\bm{\sigma_r}\in\bm{S}\cup\bm{R}$
and $row(\gamma_r)=row(\gamma_r')$, but
$row(\gamma_r\cdot\bm{\sigma_r}) \neq
row(\gamma_r'\cdot\bm{\sigma_r})$.
Let $e\in\bm{E}$ be one place where they are different. Then
$\bm{\sigma_r}\cdot e$ needs to be added to $\bm{E}$. For each entry in
$\bm{S}\cup\bm{R}$, all but the last transition in
$\bm{\sigma_r}\cdot e$ need to be guessed, then the table can be
filled.
$2^{(\lvert \bm{\sigma} \cdot e \rvert - 1) \times (\lvert \bm{S}
  \rvert + \lvert \bm{R} \rvert)}$
filled table instances will be generated and inserted into
$\mathit{ToExplore}$. The operation for making tables evidence-closed
is analogous.

Once the current table is prepared, the learner builds a hypothesis
$\mathcal{H}$ and makes an equivalence query to the teacher. If the
answer is positive, then $\mathcal{H}$ is a COTA which recognizes the
target timed language $\mathcal{L}$; otherwise, the teacher gives a
delay-timed word $\omega$ as a counterexample. The learner first finds
the longest reset-logical-timed word in $\bm{R}$ which, when converted to a
delay-timed word, agrees with a prefix of $\omega$. The remainder of
$\omega$, however, needs to be converted to a reset-logical-timed word by
guessing reset information. The corresponding prefixes are then added
to $\bm{R}$. Hence, at most $2^{\lvert\omega\rvert}$ unfilled table
instances are generated. For each unfilled table instance, at most
$2^{(\sum_{e_i \in \bm{E}\setminus\{\epsilon\}}{(\lvert e_i \rvert -
    1)})\times\lvert \omega \rvert}$
filled tables are produced and inserted into
$\mathit{ToExplore}$.

Throughout the learning process, the learner adds a finite number of
table instances to $\mathit{ToExplore}$ at every iteration. Hence, the
search tree is finite-branching. Moreover, if all guesses are correct,
the resulting table instance will be identical to the observation
table in the learning process with a smart teacher (apart from the
guessing processes, the basic table operations are the same as those
in Section~\ref{sbsc:membership}). This means, with an
appropriate search order, for example, taking the table instance that
requires the least number of guesses in $\mathit{ToExplore}$ at every
iteration, the algorithm terminates and returns the same table as in
the learning process with a smart teacher, which is a COTA that
recognizes the target language $\mathcal{L}$. In conformity to Theorem~\ref{theorem:L=>mathcalL}, the algorithm
may terminate even if the corresponding reset-logical-timed languages
are not equivalent, while yielding correct COTAs recognizing the same
delay-timed language.

\begin{theorem} \label{theorem:normal_correct}
  Algorithm~\ref{alg:learning_normal} terminates and returns a COTA
  $\mathcal{H}$ which recognizes the target timed language
  $\mathcal{L}$.
\end{theorem}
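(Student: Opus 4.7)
The plan is to reduce Theorem~\ref{theorem:normal_correct} to Theorem~\ref{theorem:termination} by exhibiting, inside the search tree that Algorithm~\ref{alg:learning_normal} explores, a distinguished ``correct-guess'' branch whose table instances evolve in lock-step with the observation table produced by Algorithm~\ref{alg:learning}. First I would establish that the search tree is finite-branching: every point at which a guess is introduced (initialization, the three operations that make the table prepared, and counterexample processing) adds at most $2^k$ children for some $k$ bounded by the current sizes of $\bm{S}$, $\bm{R}$, $\bm{E}$, and the length of the counterexample. Since each individual operation in Algorithm~\ref{alg:learning} affects only finitely many cells, the corresponding guess-induced fan-out in Algorithm~\ref{alg:learning_normal} is finite.

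Next I would define the correct-guess branch: at every guessing step, pick the reset bits that coincide with the resets the smart teacher of Section~\ref{sec:smartteacher} would have supplied for the same logical-timed word in $\mathbb{A}$. Along this branch, the membership answers returned by the normal teacher (given the transformed delay-timed words) agree exactly with those the smart teacher returns on the corresponding logical-timed words, and every table operation performed by Algorithm~\ref{alg:learning_normal} coincides with the one performed by Algorithm~\ref{alg:learning}. Hence the sequence of tables along this branch is, entry-for-entry, the sequence produced by the smart-teacher algorithm, and by Theorem~\ref{theorem:termination} it reaches a prepared table whose hypothesis $\mathcal{H}$ passes the equivalence query, all within some finite depth $d^\star$.

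To conclude termination, I would require a fair search order on $\mathit{ToExplore}$, e.g.\ popping the instance that has accumulated the fewest guesses so far, or an explicit breadth-first order on the search tree. Combined with finite branching, such an order guarantees that every node of depth at most $d^\star$ is visited within finitely many iterations, so the correct-guess node (and hence a successful equivalence query) is reached in finite time. For correctness, it suffices to note that Algorithm~\ref{alg:learning_normal} only returns $\mathcal{H}$ when the teacher confirms $\mathcal{L}(\mathcal{H})=\mathcal{L}(\mathbb{A})=\mathcal{L}$; by the contraposition of Theorem~\ref{theorem:L=>mathcalL}, any branch that survives the equivalence check yields a COTA recognizing the target timed language, even if its guessed resets differ from those of $\mathbb{A}$. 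Such ``incorrect-guess'' successes can only speed up termination, not compromise correctness.

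The principal obstacle I foresee is making the fairness argument precise: one must rule out the pathological scenario in which the chosen search strategy repeatedly expands branches whose depth grows without bound while never reaching depth $d^\star$. This is essentially a König-style argument relying on finite branching together with the fact that the correct-guess branch has a fixed, finite depth determined by the execution of Algorithm~\ref{alg:learning}; spelling it out carefully for the specific order chosen (and verifying that the guess counts stay bounded along the correct branch) is the part that deserves the most care.
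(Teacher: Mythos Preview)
Your proposal is correct and follows essentially the same approach as the paper: view the algorithm as a search over a finite-branching tree of table instances, identify the branch where every guess matches what the smart teacher would have returned, invoke Theorem~\ref{theorem:termination} to bound that branch's depth, and use the fewest-guesses-first (i.e.\ BFS-by-depth) order to guarantee the branch is reached in finitely many steps; correctness then follows from the equivalence query. The paper handles your fairness concern exactly as you anticipate, by fixing the order to always pop an instance with the fewest guesses so far, which makes the K{\"o}nig-style argument immediate.
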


\paragraph{Complexity analysis.}

If
$\mathbf{T}=(\Sigma, \bm{\Sigma}, \bm{\Sigma_r}, \bm{S}, \bm{R},
\bm{E}, f)$
is the final observation table for the correct candidate COTA, the
number of guessed resets in $\bm{S}\cup\bm{R}$ is
$\lvert\bm{S}\rvert + \lvert\bm{R}\rvert$, and the number of guessed
resets for entries in each row of the table is
$\sum_{e_i \in \bm{E}\setminus\{\epsilon\}}{(\lvert e_i \rvert -1)}$.
Hence, the total number of guessed resets is
$(\lvert \bm{S} \rvert + \lvert \bm{R} \rvert)\times(1+\sum_{e_i \in
  \bm{E}\setminus\{\epsilon\}}{(\lvert e_i \rvert-1)})$.
Assuming an appropriate search order (for example according to the
number of guesses in each table), this yields the number of table
instances considered before termination as
$\mathcal{O}(2^{(\lvert \bm{S} \rvert + \lvert \bm{R} \rvert)\times(1+\sum_{e_i \in
  \bm{E}\setminus\{\epsilon\}}{(\lvert e_i \rvert-1)})})$.

\oomit{
out of each table. For every hypothesis $\mathcal{H}$, the learner makes an equivalence query to the teacher. If the answer is positive, then $\mathcal{H}$ is a COTA which recognizes the target timed language $\mathcal{L}$; otherwise, the teacher gives a delay-timed word $\omega'$ as a counterexample, which is then transformed to $2^{\lvert \omega' \rvert}$ reset-logical-timed words by again guessing the resets. The learner hence needs a copy of the corresponding table $2^{\lvert \omega' \rvert}$ times to address all possible situations. However, there could be a number of conflicts when he adds all the prefixes of a normalized reset-logical-timed word to a copy, i.e., a prefix $\gamma_r$ is different from an element $\gamma_r'$ in $\bm{S}\cup\bm{R}$ yet $\Pi_{\{1,2\}}\gamma_r = \Pi_{\{1,2\}}\gamma_r'$. The learner may simply drop such a conflicting table and get a non-conflicting one prepared. In particular, the learner needs to guess the resets of elements in $\bm{E}$ when making a table evidence closed, yielding again exponential copies, that is, if the longest element in the current $\bm{E}$ is of the length $l$, then the number of copies is $\lvert \bm{E} \rvert \times 2^l$. The learner may also drop a table whenever it violates the supposition of the partition function. Throughout the learning process, the learner always keeps a table identical to the timed table in the learning process with a smart teacher. The process thus terminates and returns a COTA which recognizes the target language $\mathcal{L}$. It might however terminate even if the corresponding reset-logical-timed languages are not equivalent, while yielding correct COTAs recognizing the same timed language. A heuristic for selecting tables is to always deal with the table featuring the smallest size of $\bm{S}$, which often leads to a COTA of smaller size. In brief, the learner can still learn a correct DOTA by querying with delay-timed words while assembling clock resets by simple guesses, yet with an exponential blow up in the complexity compared with that of a smart teacher.}

\vspace*{-0.2cm}
\section{Implementation and Experimental Results}\label{sec:experiments}
\vspace*{-0.2cm}

To investigate the efficiency and scalability of the proposed methods,
we implemented a prototype\footnote{Available
  at~\url{https://github.com/Leslieaj/OTALearning}. The evaluated artifact is archived in~\cite{figshare}.} in
\textsc{Python} for learning deterministic one-clock timed
automata. The examples include a practical case concerning the functional specification of the TCP
protocol~\cite{rfc793} and a set of randomly generated DOTAs to be
learnt.
All of the evaluations have been carried
out on a 3.6GHz Intel Core-i7 processor with 8GB RAM running 64-bit
Ubuntu 16.04.

\paragraph{Functional specification of the TCP protocol.}
\vspace*{-0.2cm}

In~\cite{rfc793}, a state diagram on page 23 specifies state changes
during a TCP connection triggered by causing events while leading to
resulting actions.
As observed by Ouaknine and Worrell in~\cite{Ouaknine04}, such a
functional specification of the protocol can be represented as a
one-clock timed automaton. In our setting, the corresponding DOTA
$\mathcal{A}$ to be learnt is configured to have
$\lvert Q \rvert = 11$ states with the two $\mathtt{CLOSED}$ states
collapsed, $\lvert \Sigma \rvert = 10$ after abstracting the causing
events and the resulting actions, $\lvert F \rvert = 2$, and
$\lvert \Delta \rvert = 19$ with appropriately specified timing
constraints including guards and resets. Using the algorithm with the
smart teacher, a correct DOTA $\mathcal{H}$ is learned in 155 seconds
after 2600 membership queries and 28 equivalence
queries. Specifically, $\mathcal{H}$ has 15 locations excluding a sink
location connected by 28 transitions. The introduction of 4 new
locations comes from splitting of guards along transitions, which
however can be trivially merged back with other locations. The figures
depicting $\mathcal{A}$ and $\mathcal{H}$ can be found in
Appendix~\ref{appendix_tcp}.
%

\paragraph{Random examples for a smart teacher.}

We randomly generated 80 DOTAs in eight groups, with each group having
different numbers of locations, size of alphabet, and maximum constant
appearing in clock constraints. As shown in
Table~\ref{tb:random_experiment}, the proposed learning method
succeeds in all cases in identifying a DOTA that recognizes the same
timed language. In particular, the number of membership queries and
that of equivalence queries appear to grow polynomially with the size
of the problem\footnote{An exception w.r.t. the group 7\_6\_10 is due
  to relatively simple DOTAs generated occasionally.}, and are much
smaller than the worst-case bounds estimated in
Sect.~\ref{sbsc:complexity}. Moreover, the learned DOTAs do not have
prominent increases in the number of locations (by comparing
$n_{\text{mean}}$ with the first component of Case IDs). The average
wall-clock time including both time taken by the learner and by the
teacher is recorded in the last column $t_{\text{mean}}$, of which,
however, often over 90\% is spent by the teacher for checking
equivalences w.r.t. small $\mathbf{T}$'s while around 50\% by the
learner for checking the preparedness condition w.r.t. large
$\mathbf{T}$'s.

\begin{table}[t]
	\vspace*{-.3cm}
	\caption{Experimental results on random examples for the smart teacher situation.}
	\label{tb:random_experiment}
	\vspace*{-.3cm}
	\begin{center}
	\begin{tabular}{rcrcP{25pt}P{28pt}P{23pt}cP{20pt}P{23pt}P{18pt}c*{2}{P{25pt}}}
		\toprule
		\multirow{2}*{Case ID} & & \multirow{2}*{$\lvert\Delta\rvert_{\text{mean}}$} & &  \multicolumn{3}{c}{{\#}Membership} & & \multicolumn{3}{c}{{\#}Equivalence} & & \multirow{2}*{$n_{\text{mean}}$} & \multirow{2}*{$t_{\text{mean}}$}\\
		\cmidrule{5-7} \cmidrule{9-11}
		& & & & $N_{\text{min}}$ & $N_{\text{mean}}$ & $N_{\text{max}}$ & & $N_{\text{min}}$ & $N_{\text{mean}}$ & $N_{\text{max}}$&  &  & \\
		\midrule
		4\_4\_20 & & 16.3 & & 118 & 245.0 & 650 & & 20 & 30.1 & 42 & & 4.5 & 24.7\\
		7\_2\_10 & & 16.9 & & 568 & 920.8 & 1393 & & 23 & 31.3 & 37 & & 9.1 & 14.6\\
		7\_4\_10 & & 25.7 & & 348 & 921.7 & 1296 & & 34 & 50.9 & 64 & & 9.3 & 38.0 \\
		7\_6\_10 & & 26.0 & & 351 & 634.5 & 1050 & & 35 & 44.7 & 70 & & 7.8 & 49.6\\
		7\_4\_20 & & 34.3 & & 411 & 1183.4 & 1890 & & 52 & 70.5 & 93 & & 9.5 & 101.7\\
		10\_4\_20 & & 39.1 & & 920 & 1580.9 & 2160 & & 61 & 73.1 & 88 & & 11.7 & 186.7\\
		12\_4\_20 & & 47.6 & & 1090 & 2731.6 & 5733 & & 66 & 97.4 & 125 & & 16.0 & 521.8\\
		14\_4\_20 & & 58.4 & & 1390 & 2238.6 & 4430 & & 79 & 107.7 & 135 & & 16.0 & 515.5\\
		\bottomrule
	\end{tabular}
	\end{center}
\vspace*{-2mm}
\begin{minipage}{\linewidth}
{\fontsize{8pt}{\baselineskip}
    Case ID: $n$\_$m$\_$\kappa$, consisting of the number of locations, the size of the alphabet and the maximum constant appearing in the clock constraints, respectively, of the corresponding group of $\mathcal{A}$'s.\\
	$\lvert\Delta\rvert_{\text{mean}}$: the average number of transitions in the corresponding group.\\
	{\#}Membership \& {\#}Equivalence: the number of conducted membership and equivalence queries, respectively. $N_{\text{min}}$: the minimal, $N_{\text{mean}}$: the mean, $N_{\text{max}}$: the maximum.\\
	$n_{\text{mean}}$: the average number of locations of the learned automata in the corresponding group.\\
	$t_{\text{mean}}$: the average wall-clock time in seconds, including that taken by the learner and the teacher.
}
\end{minipage}
\end{table}

It is worth noting that all of the results reported above are carried out on an implementation equipped with the \emph{accelerating trick} discussed in Sect.~\ref{sbsc:accelerating}. We remark that when \emph{dropping} this trick, the average number of membership queries blow up with a factor of 0.83 (min) to 15.02 (max) with 2.16 in average for all the 8 groups, and 0.84 (min) to 1.71 (max) with 1.04 for the average number of equivalence queries, leading to dramatic increases also in the computation time (including that in operating tables). The alternative implementation and experimental results without the accelerating trick can also be found in the tool page (under the \texttt{dev} branch).

\begin{table}[t]
	\vspace*{-.3cm}
	\caption{Experimental results on random examples for the normal teacher situation.}
	\label{tb:random_experiment_normal}
	\vspace*{-.3cm}
	\begin{center}
		\begin{tabular}{rcrcP{21pt}P{21pt}P{21pt}cP{21pt}P{21pt}P{18pt}c*{2}{P{20pt}}P{35pt}P{33pt}}
			\toprule
			\multirow{2}*{Case ID} & & \multirow{2}*{$\lvert\Delta\rvert_{\text{mean}}$} & &  \multicolumn{3}{c}{{\#}Membership} & & \multicolumn{3}{c}{{\#}Equivalence} & & \multirow{2}*{$n_{\text{mean}}$} & \multirow{2}*{$t_{\text{mean}}$} & \multirow{2}*{{\#}$\mathbf{T}_{\textit{explored}}$} & \multirow{2}*{{\#}Learnt}\\
			\cmidrule{5-7} \cmidrule{9-11}
			& & & & $N_{\text{min}}$ & $N_{\text{mean}}$ & $N_{\text{max}}$ & & $N_{\text{min}}$ & $N_{\text{mean}}$ & $N_{\text{max}}$&  &  & &  & \\
			\midrule
			3\_2\_10 & & 4.8 & & 43 & 83.7 & 167 & & 5 & 8.8 & 14 & & 3.0 & 0.9 & 149.1 & 10/10 \\
			4\_2\_10 & & 6.8 & & 67 & 134.0 & 345 & & 6 & 13.3 & 24 & & 4.0 & 7.4 & 563.0 & 10/10 \\
			5\_2\_10 & & 8.8 & & 75 & 223.9 & 375 & & 9 & 15.2 & 24 & & 5.0 & 35.5 & 2811.6 & 10/10 \\
			6\_2\_10 & & 11.9 & & 73 & 348.3 & 708 & & 10 & 16.7 & 30 & & 5.6 & 59.8 & 5077.6 & 7/10 \\
			4\_4\_20 & & 16.3 & & 231 & 371.0 & 564 & & 27 & 30.9 & 40 & & 4.0 & 137.5 & 8590.0 & 6/10 \\
			\bottomrule
		\end{tabular}
	\end{center}
	\vspace*{-2mm}
	\begin{minipage}{\linewidth}
        {\fontsize{8pt}{\baselineskip}
			{\#}Membership \& {\#}Equivalence: the number of conducted membership and equivalence queries with the cached methods, respectively. $N_{\text{min}}$: the minimal, $N_{\text{mean}}$: the mean, $N_{\text{max}}$: the maximum.\\
			{\#}$\mathbf{T}_{\textit{explored}}$: the average number of the explored table instances.\\
			{\#}Learnt: the number of the learnt DOTAs in the group (learnt/total).
		}
	\end{minipage}
\end{table}

\paragraph*{Random examples for a normal teacher.}
Due to its high, exponential complexity, the algorithm with a normal teacher failed (out of memory) in identifying DOTAs for almost all the above examples, except 6 cases out of the 10 in group $4\_4\_20$. We therefore randomly
  generated 40 extra DOTAs of smaller size classified into 4 groups. With the accelerating trick, the
  learner need not guess the resets in elements of $\bm{E}$ for an
  entry in $\bm{S}\cup\bm{R}$ if the querying result of the entry is
  the sink location. We also omitted the checking of the evidence-closed
  condition, since it may add redundant rows in $\bm{R}$, leading to more guesses and thereby a larger search space. The omission does not affect the
  correctness of the learnt DOTAs. Moreover, as different table instances
  may generate repeated queries, we cached the results of membership
  queries and counterexamples, such that the numbers of membership and
  equivalence queries to the teacher can be significantly reduced. Table~\ref{tb:random_experiment_normal} shows the
  performance of the algorithm in this setting. Results without caching are available
  in the tool page (under the \texttt{normal} branch).

\section{Conclusion}\label{sec:conclusion}

We have presented a polynomial active learning method for
deterministic one-clock timed automata from a smart teacher who can
tell information about clock resets in membership and equivalence
queries. Our technique is based on converting the problem to that of
learning reset-logical-timed languages. 
We then extend the method to learning DOTAs from a normal teacher who receives delay-timed words
for membership queries, while the learner guesses the reset information in the
observation table. We evaluate both algorithms on randomly generated
examples and, for the former case, the functional
specification of the TCP protocol.

Moving forward, an extension of our active learning method to
nondeterministic OTAs and timed automata involving multiple
clocks is of particular interest.

	
\newpage
\bibliographystyle{splncs04}
\bibliography{reference}

\newpage

\begin{subappendices}
\renewcommand{\thesection}{\Alph{section}} 

\section{Proofs for Lemmas and Theorems}\label{appendix_proof}


\begin{proof}[of Theorem~\ref{theorem:L=>mathcalL}]
	By the definitions of delay-timed word and reset-delay-timed word, it suffices that $\mathcal{L}(\mathcal{A})=\mathcal{L}(\mathcal{B})$ if $\mathcal{L}_r(\mathcal{A})=\mathcal{L}_r(\mathcal{B})$. By the definitions of reset-delay-timed word and reset-logical-timed word with their mutual transforming method, we conclude that $\mathcal{L}_r(\mathcal{A})=\mathcal{L}_r(\mathcal{B})$ iff $L_r(\mathcal{A})=L_r(\mathcal{B})$. Hence, if $L_r(\mathcal{A})=L_r(\mathcal{B})$, then $\mathcal{L}(\mathcal{A})=\mathcal{L}(\mathcal{B})$. This completes the proof. \qed
\end{proof}

\begin{proof}[of Lemma~\ref{lemma:M=T}]
	A reset-logical-timed word $\gamma_r \in \bm{S}\cup\bm{R}$ happens in two cases, i.e., $\gamma_r\in\bm{S}$ or $\gamma_r\in\bm{R}$. For the first case, $\pi(\Pi_{\{1,2\}}\gamma_r \cdot e) \in \bm{S}\cup\bm{R}$ holds for all $e\in\bm{E}$ since $\mathbf{T}$ is evidence-closed. Hence let $\gamma_r' = \pi(\Pi_{\{1,2\}}\gamma_r \cdot e)$, then obviously $\gamma_r' \in \bm{S}\cup\bm{R}$. Therefore, if $f(\gamma_r\cdot e) = +$, then $f(\gamma_r'\cdot \epsilon)=+$, meaning that $\gamma_r'$ ends in $q_{\mathit{row}(\gamma_r')}\in F_{M}$, namely the constructed DFA $\text{M}$ accepts $\pi(\Pi_{\{1,2\}}\gamma_r \cdot e)$. Furthermore, if $f(\gamma_r \cdot e) = -$, then $f(\gamma_r'\cdot\epsilon)=-$, indicating that $\gamma_r'$ ends in $q_{\mathit{row}(\gamma_r')}\notin F_{M}$. This follows that $\text{M}$ does not accept $\pi(\Pi_{\{1,2\}}\gamma_r \cdot e)$. 
	
	For the second case, i.e. $\gamma_r\in\bm{R}$, then there exists $\gamma_r'\in\bm{S}$ such that $\mathit{row}(\gamma_r')=\mathit{row}(\gamma_r)$ since $\mathbf{T}$ is closed, which further implies that $f(\gamma_r \cdot e)=f(\gamma_r' \cdot e)$ for all $e\in\bm{E}$. Thus, it is reduced to the first case. 
	\qed
\end{proof}

\begin{proof}[of Lemma~\ref{lemma:H=T}]
	Given a DFA $\text{M}$, 
	by the above construction, 
	for each transition $(q,(\sigma,\mu, b),q')\in\Delta_{M}$, there is a corresponding transition $\delta=(q,\sigma,I,b, q')$ where $\mu \in I \in P^{c}{(\Psi_{q,\sigma})}$ in the hypothesis $\mathcal{H}$. Hence, given a reset-logical-timed word $\pi(\Pi_{\{1,2\}}\gamma_r \cdot e) = (\sigma_1,\mu_1, b_1)\cdots(\sigma_n,\mu_n,b_n)$, $\mathcal{H}$ accepts this word iff $\text{M}$ accepts it. By Lemma~\ref{lemma:M=T}, $\mathcal{H}$ accepts $\pi(\Pi_{\{1,2\}}\gamma_r \cdot e)$ iff $f(\gamma_r\cdot e) = +$. 
	\qed
\end{proof}

\begin{proof}[of Theorem~\ref{theorem:H_is_cota}]
	First, in order to guarantee $\mathbf{T}$ being closed when moving an element $r\in\bm{R}$ to $\bm{S}$, the learner adds the reset-logical-timed word $\pi(\Pi_{\{1,2\}}r \cdot (\sigma,0))$ for every $\sigma\in\Sigma$ to $\bm{R}$, which means that 
	there is always at least a outgoing transition from $q_{\mathit{row}(r)}$ for every action in $\Sigma$. 
	Secondly,  Definition~\ref{def:partition} implies 
	that $P^{c}{(\Psi_{q_{\mathit{row}(r)},\sigma})}$ is a partition of $\mathbb{R}_{\geq 0}$ for every $\sigma\in\Sigma$. Hence, $\mathcal{H}$ is a COTA. \qed
\end{proof}

\begin{proof}[of Theorem~\ref{theorem:region_correct}]
	By Lemma~\ref{lemma:M=T},~\ref{lemma:H=T} and Theorem~\ref{theorem:H_is_cota}, for every $\gamma_r \cdot e \in(\bm{S}\cup\bm{R})\cdot\bm{E}$, there exists a unique accepting run $\rho$ that admits $\pi(\Pi_{\{1,2\}}\gamma_r \cdot e)$ if $f(\gamma_r \cdot e) = +$. Hence, every logical-timed action $(\sigma_i,\mu_i,b_i)$, with $1 \leq i \leq n$, triggers a transition $\delta_i=(q_i, \sigma_i, \phi_i, b_i, q_{i+1})$ from $q_i$ to $q_{i+1}$, where $\mu_i \in \phi_i$ by Definition~\ref{def:partition}. By the above definition,
	$\llbracket \mu_i \rrbracket \subseteq \phi_i$, therefore $(\sigma_i,\mu_i',b_i)$ can also trigger the transition $\delta_i$. Hence, there exists a unique accepting run $\rho'$ that admits $\gamma_r'=(\sigma_1,\mu_1',b_1)\cdots(\sigma_n,\mu_n',b_n)$,
	i.e., $\mathcal{H}$ admits $\gamma_r'$. Suppose it is not the case when $f(\gamma_r\cdot e) = -$, 
	it is easy to follow  $f(\gamma_r\cdot e) = +$ by Lemma~\ref{lemma:H=T}, which contradicts to $f(\gamma_r\cdot e) = -$. \qed 
\end{proof}

\begin{proof}[of Theorem~\ref{theorem:refine_correctness}]
	Let $\gamma_r= (\sigma_1, \mu_1, b_1)(\sigma_2,\mu_2,b_2)\cdots(\sigma_n,\mu_n,b_n)$ and its normalization $\gamma_r' = (\sigma_1, \mu_1', b_1)(\sigma_2,\mu_2',b_2)\cdots(\sigma_n,\mu_n',b_n)$. By the definition of $g$, $\llbracket \mu' \rrbracket = \llbracket \mu \rrbracket$. Therefore, if $(\sigma_i, \mu_i, b_i)$ fires a transition $\delta_i=(q_i, \sigma_i, \phi_i, b_i, q_{i+1})$, 
	then $(\sigma_i, \mu_i', b_i)$ can also fire the same transition according to the definition of $\Phi_c$. 
	Specifically, by the assumption that $\mathbb{A}$ is deterministic, both the two timed actions can only 
	be taken by the transition. Hence, $\gamma_r'$ and $\gamma_r$ share the same sequence of transitions in $\mathbb{A}$. \qed 
\end{proof}
\begin{proof}[of Theorem~\ref{theorem:termination}]
	By Theorem~\ref{theorem:H_is_cota}, the returned hypothesis $\mathcal{H}$ is a COTA. Then the correctness (i.e. $\mathcal{H}$ recognizes the target timed language) follows directly from the equivalence query.
	Now we prove the termination. Observe that each reset-logical-timed word $s \in \bm{S}$ corresponds to a symbolic state reached after running $s$ on $\mathbb{A}$. Since $\mathbf{T}$ is reduced, implying that given any two elements $s_1,s_2 \in \bm{S}$, there exists $e\in E$ such that running $s_1\cdot e$ and $s_2\cdot e$ on $\mathbb{A}$ gives different
	acceptance results. Further by Theorem~\ref{theorem:region_correct}, $s_1$ and $s_2$ must reach different symbolic states of $\mathbb{A}$. Hence, there is an injection from $\bm{S}$ (or
	equivalently, the locations of $\mathcal{H}$) to symbolic states of $\mathbb{A}$. It follows that the size of the set $\bm{S}$ is bounded by $\lvert Q \rvert \times (2\kappa+2)$. Since each iteration of the algorithm either adds at least one element to $\bm{S}$ or refines at least one of the partitions along transitions of $\mathcal{H}$, or both, the algorithm is guaranteed to terminate. \qed
\end{proof}

\begin{proof}[of Theorem~\ref{theorem:normal_correct}]
   The algorithm can be viewed as a breadth-first-search (BFS) on
    a finite multi-way tree, each of whose nodes is a filled table
    instance. The depth of a node is the number of guessed resets. In
    other words, table instances at the same depth have the same
    number of guesses. The learner takes out a table instance that
    required the least number of guesses in $\mathit{ToExplore}$ at
    every iteration. If
    $\mathbf{T} = (\Sigma, \bm{\Sigma}, \bm{\Sigma_r}, \bm{S}, \bm{R},
    \bm{E}, f)$
    is the final table in the learning process with a smart teacher,
    $\mathbf{T}$ can be found at depth at most
    $(\lvert \bm{S} \rvert + \lvert \bm{R} \rvert)\times(1+\sum_{e_i
      \in \bm{E}\setminus\{\epsilon\}}{(\lvert e_i \rvert-1)})$
    of the tree, since it corresponds to choosing the correct table
    instance corresponding to the smart teacher situation at every
    guess. Then the learner can find $\mathbf{T}$ after
    $2^{(\lvert \bm{S} \rvert + \lvert \bm{R}
      \rvert)\times(1+\sum_{e_i \in
        \bm{E}\setminus\{\epsilon\}}{(\lvert e_i \rvert-1)})}$
    steps in the worst case, since the learner has to check all of the
    tables at earlier depths before entering depth
    $(\lvert \bm{S} \rvert + \lvert \bm{R} \rvert)\times(1+\sum_{e_i
      \in \bm{E}\setminus\{\epsilon\}}{(\lvert e_i \rvert-1)})$
    and finds $\mathbf{T}$. Consequently, the algorithm terminates and
    returns a correct COTA $\mathcal{H}$ which recognizes the target
    timed language $\mathcal{L}$. \qed
\end{proof}

\newpage
\section{Detailed Learning Process for the DOTA $\mathcal{A}$ in Fig.~\ref{fig:dota_cota}}\label{appendix_learning_details}

\begin{figure}[h]
	\hspace{.4cm}
	\begin{minipage}{0.2\linewidth}
		\begin{center}
			\resizebox{.9\textwidth}{!}{
				\begin{tabular}{r|c}
					\hline
					$\mathbf{T_5}$ & $\epsilon$ \\ 
					\hline
					$\epsilon$ & $-$ \\
					$(a,1.1,\bot)$ & $+$ \\
					\hline
					$(a,0,\top)$ & $-$ \\
					$(b,0,\top)$ & $-$ \\
					$(a,1.1,\bot)(a,0,\top)$ & $-$ \\
					$(a,1.1,\bot)(b,0,\top)$ & $-$ \\
					$(a,1.1,\bot)(b,2,\top)$ & $+$ \\
					$(a,3,\top)$ & $-$ \\
					\hline
				\end{tabular}
			}
		\end{center}			
	\end{minipage}
	\hspace{.2cm}
	\begin{minipage}{0.26\textwidth}
		\begin{center}
			\begin{tikzpicture}[->, >=stealth', shorten >=1pt, auto, node distance=3cm, semithick, scale=0.9, every node/.style={scale=0.7}]
			\node[initial,state] (0) {$q_{-}$};
			\node[accepting, state] (1) [right of = 0] {$q_{+}$};
			\path (0) edge [in= 145, out=110, loop] node[left] {$(a,0,\top)$} (0)
			(0) edge [in= 35, out=70, loop] node[above] {$(a,3,\top)$} (0)
			(0) edge [in= -145, out=-110, loop] node[left] {$(b,0,\top)$} (0)
			(0) edge node[above] {$(a,1.1,\bot)$} (1)
			(1) edge[loop above] node[above] {$(b,2,\top)$} (1)
			(1) edge[in=-30, out=-150] node[below] {$(a,0,\top)$} (0)
			(1) edge[in=-70, out=-110] node[below] {$(b,0,\top)$} (0);
			\node [below=25pt, align=flush center,text width=3cm] at (0) {$\text{M}_5$};
			\end{tikzpicture}
		\end{center}
	\end{minipage}
	\hspace{.8cm}
	\begin{minipage}{0.26\textwidth}
		\begin{center}
			\begin{tikzpicture}[->, >=stealth', shorten >=1pt, auto, node distance=3cm, semithick, scale=0.9, every node/.style={scale=0.7}]
			\node[initial,state] (0) {$q_{-}$};
			\node[accepting, state] (1) [right of = 0] {$q_{+}$};
			\path (0) edge [in= 145, out=110, loop] node[left] {$a,[0,1],\top$} (0)
			(0) edge [in= 35, out=70, loop] node[above] {$a,[3,\infty),\top$} (0)
			(0) edge [in= -145, out=-110, loop] node[left] {$b,[0,\infty),\top$} (0)
			(0) edge node[above] {$a,(1,3),\bot$} (1)
			(1) edge[loop above] node[above] {$b,[2,\infty),\top$} (1)
			(1) edge[in=-30, out=-150] node[below] {$a,[0,\infty),\top$} (0)
			(1) edge[in=-70, out=-110] node[below] {$b,[0,2),\top$} (0);
			\node [below=25pt, align=flush center,text width=3cm] at (0) {$\mathcal{H}_5$};
			\end{tikzpicture}
		\end{center}
	\end{minipage}
	\caption{The prepared timed observation table $\mathbf{T_5}$, the corresponding DFA $\text{M}_5$ and hypothesis $\mathcal{H}_5$. A copy of Fig.~\ref{fig:T_to_M_to_H} for the ease of reading.}
	\label{fig:T_to_M_to_H_duplicated}
\end{figure}

\paragraph*{Hypothesis construction for $\mathbf{T}_5$.}
Suppose $\mathbb{A}$ in Fig.~\ref{fig:dota_cota} recognizes the target timed language. Then the prepared table $\mathbf{T_5}$, the corresponding DFA $\text{M}_5$ and hypothesis $\mathcal{H}_5$ are depicted in Fig.~\ref{fig:T_to_M_to_H_duplicated} (a copy of Fig.~\ref{fig:T_to_M_to_H} for the ease of reading). The learner first builds the DFA $\text{M}_5$ as follows. For $\epsilon$ and $(a,1.1,\bot)$ in $\bm{S}$, set $Q_{M_5} = \{q_{-}, q_{+}\}$ with $\mathit{row}(\epsilon)=-$ and $\mathit{row}((a,1.1,\bot))=+$; while $q^0_{M_5}=q_{-}$; $F_{M_5}=\{q_{+}\}$ with $f((a,1.1,\bot)\cdot\epsilon)=+$; $\Sigma_{M_5}=\{(a,0,\top),(b,0,\top),(a,1.1,\bot),(b,2,\top),(a,3,\top)\}$ and in the mean time $\Delta_{M_5}=\{(q_{-},(a,1.1,\bot),q_{+}),(q_{-},(a,0,\top),q_{-}), (q_{-},(b,0,\top),q_{-}), (q_{+},(a,0,\top),q_{-}),\\ (q_{+},(b,0,\top),q_{-}), (q_{+},(b,2,\top),q_{-}), (q_{-},(a,3,\top),q_{-})\}$. The DFA $\text{M}_5$ is then transformed to $\mathcal{H}_5=(\Sigma,Q,q_0,F,c,\Delta)$ with the same set of locations, initial location and set of accepting locations. For location $q_{-}$, $\Psi_{q_{-},a} = \{0,3,1.1\}$. We arrange it to a list $\ell_{q_{-},a}=0,1.1,3$ and then $P^{c}(\ell_{q_{-},a})=\{[0,1],(1,3),[3,\infty)\}$. Then for transitions $(q_{-},(a,0,\top),q_{-})$,$(q_{-},(a,1.1,\bot),q_{+})$ and $(q_{-},(a,3,\top),q_{1})$ in $\Delta_{M_5}$, let $\delta_1=(q_{-},a,[0,1],\top,q_{-})$, $\delta_2=(q_{-},a,(1,3),\bot,q_{+})$ and $\delta_3=(q_{-},a,[3,\infty],\top,q_{-})$ be fresh transitions. 
The transformation for other locations and actions is analogous.
\paragraph*{The entire learning process.}
In Fig.~\ref{fig:ie_automata}, a prepared $\mathbf{T}_1$ is the initial instance of the table. The learner builds a DFA $\text{M}_1$ and a hypothesis $\mathcal{H}_1$ in Fig.~\ref{fig:ie_automata}. After making an equivalence query to the teacher, he receives a counterexample $\mathit{ctx}_1=((a,1.1,\bot),+)$. By transforming the delay-timed-word $(a,1.1,\bot)$ to a reset-logical-timed word $(a,1.1,\bot)$, the learner adds the normalized counterpart to the table and thus get the second instance $\mathbf{T}_2$ which is not closed. Hence, he moves $(a,1.1,\bot)$ from $\bm{R}$ to $\bm{S}$, and then adds reset-logical-timed words $(a,1.1,\bot)(a,0,\top)$ and $(a,1.1,\bot)(b,0,\top)$ to $\bm{R}$ after membership queries. The instance $\mathbf{T}_3$ is prepared. After two iterations, he arrives at $\mathbf{T}_5$. As described in Example~\ref{example:wrong_ctx}, the learner normalizes the transformed reset-logical-timed word $(a,0,\top)(a,1.3,\top)$ as $(a,0,\top)(a,1.1,\top)$. Then he gets $\mathbf{T}_6$ which is not consistent, since $\mathit{row}(\epsilon)=\mathit{row}((a,0,\top))$ and $\Pi_{\{1,2\}}((a,1.1,\bot))=\Pi_{\{1,2\}}((a,1.1,\top))$, but $\mathit{row}((a,1.1,\bot)) \neq \mathit{row}((a,0,\top)(a,1.1,\top))$. Hence, he adds $(a,1.1)\cdot\epsilon$ to $\bm{E}$ for $f((a,1.1,\bot)\cdot\epsilon) \neq f((a,0,\top)(a,1.1,\top) \cdot \epsilon)$ leading to $\mathbf{T}_7$. The process goes on until the learner finally gets a hypothesis $\mathcal{H}_{10}$ which recognizes the target timed language. Obviously, after combining transitions according to guards in $\mathcal{H}_{10}$,
we get a COTA same to $\mathbb{A}$ as depicted in Fig.~\ref{fig:dota_cota}.
\begin{figure}[H]
	\begin{center}
	\resizebox{\linewidth}{!}{
		\begin{tabular}{r|c}
			\hline
			$\mathbf{T_1}$ & $\epsilon$ \\ 
			\hline
			$\epsilon$ & $-$ \\
			\hline
			$(a,0,\top)$ & $-$ \\
			$(b,0,\top)$ & $-$ \\
			\hline
		\end{tabular}
		$\xLongrightarrow[g((a, 1.1, \bot))=(a, 1.1, \bot)]{\mathit{ctx}_1=(a, 1.1, \bot), +}$
		\begin{tabular}{r|c}
			\hline
			$\mathbf{T_2}$ & $\epsilon$ \\ 
			\hline
			$\epsilon$ & $-$ \\
			\hline
			$(a,0,\top)$ & $-$ \\
			$(b,0,\top)$ & $-$ \\
			$(a,1.1,\bot)$ & $+$ \\
			\hline
		\end{tabular}
		$\xLongrightarrow{\text{closed}}$
		\begin{tabular}{r|c}
			\hline
			$\mathbf{T_3}$ & $\epsilon$ \\ 
			\hline
			$\epsilon$ & $-$ \\
			$(a,1.1,\bot)$ & $+$ \\
			\hline
			$(a,0,\top)$ & $-$ \\
			$(b,0,\top)$ & $-$ \\
			$(a,1.1,\bot)(a,0,\top)$ & $-$ \\
			$(a,1.1,\bot)(b,0,\top)$ & $-$ \\
			\hline
		\end{tabular}
		$\xLongrightarrow[g((a,1.1,\bot)(b,2,\top))=(a,1.1,\bot)(b,2,\top)]{\mathit{ctx}_2=(a, 1.1, \bot)(b, 0.9, \top), +}$
		\begin{tabular}{r|c}
			\hline
			$\mathbf{T_4}$ & $\epsilon$ \\ 
			\hline
			$\epsilon$ & $-$ \\
			$(a,1.1,\bot)$ & $+$ \\
			\hline
			$(a,0,\top)$ & $-$ \\
			$(b,0,\top)$ & $-$ \\
			$(a,1.1,\bot)(a,0,\top)$ & $-$ \\
			$(a,1.1,\bot)(b,0,\top)$ & $-$ \\
			$(a,1.1,\bot)(b,2,\top)$ & $+$ \\
			\hline
		\end{tabular}
	}
	\resizebox{1\linewidth}{!}{
		$\xLongrightarrow[g((a,3,\top))=(a,3,\top)]{\mathit{ctx}_3=(a,3,\top),-}$
		\begin{tabular}{r|c}
			\hline
			$\mathbf{T_5}$ & $\epsilon$ \\ 
			\hline
			$\epsilon$ & $-$ \\
			$(a,1.1,\bot)$ & $+$ \\
			\hline
			$(a,0,\top)$ & $-$ \\
			$(b,0,\top)$ & $-$ \\
			$(a,1.1,\bot)(a,0,\top)$ & $-$ \\
			$(a,1.1,\bot)(b,0,\top)$ & $-$ \\
			$(a,1.1,\bot)(b,2,\top)$ & $+$ \\
			$(a,3,\top)$ & $-$ \\
			\hline
		\end{tabular}
		$\xLongrightarrow[g((a,0,\top)(a,1.3,\top))=(a,0,\top)(a,1.1,\top)]{\mathit{ctx}_4=(a,0,\top)(a,1.3,\top),-}$
		\begin{tabular}{r|c}
			\hline
			$\mathbf{T_6}$ & $\epsilon$ \\ 
			\hline
			$\epsilon$ & $-$ \\
			$(a,1.1,\bot)$ & $+$ \\
			\hline
			$(a,0,\top)$ & $-$ \\
			$(b,0,\top)$ & $-$ \\
			$(a,1.1,\bot)(a,0,\top)$ & $-$ \\
			$(a,1.1,\bot)(b,0,\top)$ & $-$ \\
			$(a,1.1,\bot)(b,2,\top)$ & $+$ \\
			$(a,3,\top)$ & $-$ \\
			$(a,0,\top)(a,1.1,\top)$ & $-$ \\
			\hline
		\end{tabular}
		$\xLongrightarrow{\text{consistent}}$
		\begin{tabular}{r|cc}
			\hline
			$\mathbf{T_7}$ & $\epsilon$ & $(a, 1.1)$ \\ 
			\hline
			$\epsilon$ & $-$ & $+$ \\
			$(a,1.1,\bot)$ & $+$ & $-$ \\
			\hline
			$(a,0,\top)$ & $-$ & $-$ \\
			$(b,0,\top)$ & $-$ & $+$ \\
			$(a,1.1,\bot)(a,0,\top)$ & $-$ & $-$ \\
			$(a,1.1,\bot)(b,0,\top)$ & $-$ & $-$ \\
			$(a,1.1,\bot)(b,2,\top)$ & $+$ & $-$ \\
			$(a,3,\top)$ & $-$ & $-$ \\
			$(a,0,\top)(a,1.1,\top)$ & $-$ & $-$ \\
			\hline
		\end{tabular}
	}
	\resizebox{\linewidth}{!}{
		$\xLongrightarrow{\text{evidence closed}}$
		\begin{tabular}{r|cc}
			\hline
			$\mathbf{T_8}$ & $\epsilon$ & $(a, 1.1)$ \\ 
			\hline
			$\epsilon$ & $-$ & $+$ \\
			$(a,1.1,\bot)$ & $+$ & $-$ \\
			\hline
			$(a,0,\top)$ & $-$ & $-$ \\
			$(b,0,\top)$ & $-$ & $+$ \\
			$(a,1.1,\bot)(a,0,\top)$ & $-$ & $-$ \\
			$(a,1.1,\bot)(b,0,\top)$ & $-$ & $-$ \\
			$(a,1.1,\bot)(b,2,\top)$ & $+$ & $-$ \\
			$(a,3,\top)$ & $-$ & $-$ \\
			$(a,0,\top)(a,1.1,\top)$ & $-$ & $-$ \\
			$(a,1.1,\bot)(a,1.1,\top)$ & $-$ & $-$ \\
			\hline
		\end{tabular}
		$\xLongrightarrow{\text{closed}}$
		\begin{tabular}{r|cc}
			\hline
			$\mathbf{T_9}$ & $\epsilon$ & $(a, 1.1)$ \\ 
			\hline
			$\epsilon$ & $-$ & $+$ \\
			$(a,1.1,\bot)$ & $+$ & $-$ \\
			$(a,0,\top)$ & $-$ & $-$ \\
			\hline
			$(b,0,\top)$ & $-$ & $+$ \\
			$(a,1.1,\bot)(a,0,\top)$ & $-$ & $-$ \\
			$(a,1.1,\bot)(b,0,\top)$ & $-$ & $-$ \\
			$(a,1.1,\bot)(b,2,\top)$ & $+$ & $-$ \\
			$(a,3,\top)$ & $-$ & $-$ \\
			$(a,0,\top)(a,1.1,\top)$ & $-$ & $-$ \\
			$(a,1.1,\bot)(a,1.1,\top)$ & $-$ & $-$ \\
			$(a,0,\top)(a,0,\top)$ & $-$ & $-$ \\
			$(a,0,\top)(b,0,\top)$ & $-$ & $-$ \\
			\hline
		\end{tabular}
		$\xLongrightarrow[g((a,1.1,\bot)(b,4,\top))=(a,1.1,\bot)(b,4,\top)]{\mathit{ctx}_5=(a,1.1,\bot)(b,2.9,\top),-}$
		\begin{tabular}{r|cc}
			\hline
			$\mathbf{T_{10}}$ & $\epsilon$ & $(a, 1.1)$ \\ 
			\hline
			$\epsilon$ & $-$ & $+$ \\
			$(a,1.1,\bot)$ & $+$ & $-$ \\
			$(a,0,\top)$ & $-$ & $-$ \\
			\hline
			$(b,0,\top)$ & $-$ & $+$ \\
			$(a,1.1,\bot)(a,0,\top)$ & $-$ & $-$ \\
			$(a,1.1,\bot)(b,0,\top)$ & $-$ & $-$ \\
			$(a,1.1,\bot)(b,2,\top)$ & $+$ & $-$ \\
			$(a,3,\top)$ & $-$ & $-$ \\
			$(a,0,\top)(a,1.1,\top)$ & $-$ & $-$ \\
			$(a,1.1,\bot)(a,1.1,\top)$ & $-$ & $-$ \\
			$(a,0,\top)(a,0,\top)$ & $-$ & $-$ \\
			$(a,0,\top)(b,0,\top)$ & $-$ & $-$ \\
			$(a,1.1,\bot)(b,4,\top)$ & $-$ & $-$ \\
			\hline
		\end{tabular}
	}\\[.3cm]
	\resizebox{1\linewidth}{!}{
	\begin{tabular}{ccccc}
		\hspace*{1cm}
		\begin{tikzpicture}[->, >=stealth', shorten >=1pt, auto, node distance=3cm, semithick, scale=0.9, every node/.style={scale=0.7}]
		\node[initial,state] (0) {$q_{-}$};
		\path (0) edge [in= 145, out=110, loop] node[left] {$(a,0,\top)$} (0)
		(0) edge [in= 70, out=35, loop] node[above] {$(b,0,\top)$} (0);
		\node [below=15pt, align=flush center,text width=3cm] at (0) {$\text{M}_1$};
		\end{tikzpicture}
		& &
		\begin{tikzpicture}[->, >=stealth', shorten >=1pt, auto, node distance=3cm, semithick, scale=0.9, every node/.style={scale=0.7}]
		\node[initial,state] (0) {$q_-$};
		\path (0) edge [in= 145, out=110, loop] node[left] {$a,[0,\infty),\top$} (0)
		(0) edge [in= 70, out=35, loop] node[above] {$b,[0,\infty),\top$} (0);
		\node [below=15pt, align=flush center,text width=3cm] at (0) {$\mathcal{H}_1$};
		\end{tikzpicture}
		& &
		\begin{tikzpicture}[->, >=stealth', shorten >=1pt, auto, node distance=3cm, semithick, scale=0.9, every node/.style={scale=0.7}]
		\node[initial,state] (0) {$q_{-}$};
		\node[accepting, state] (1) [right of = 0] {$q_{+}$};
		\path (0) edge [in= 145, out=110, loop] node[left] {$(a,0,\top)$} (0)
		(0) edge [in= -145, out=-110, loop] node[left] {$(b,0,\top)$} (0)
		(0) edge node[above] {$(a,1.1,\bot)$} (1)
		(1) edge[in=-30, out=-150] node[below] {$(a,0,\top)$} (0)
		(1) edge[in=-70, out=-110] node[below] {$(b,0,\top)$} (0);
		\node [below=25pt, align=flush center,text width=3cm] at (0) {$\text{M}_3$};
		\end{tikzpicture}
		\\
		\begin{tikzpicture}[->, >=stealth', shorten >=1pt, auto, node distance=3cm, semithick, scale=0.9, every node/.style={scale=0.7}]
		\node[initial,state] (0) {$q_{-}$};
		\node[accepting, state] (1) [right of = 0] {$q_{+}$};
		\path (0) edge [in= 145, out=110, loop] node[left] {$a,[0,1],\top$} (0)
		(0) edge [in= -145, out=-110, loop] node[left] {$b,[0,\infty),\top$} (0)
		(0) edge node[above] {$a,(1,\infty),\bot$} (1)
		(1) edge[in=-30, out=-150] node[below] {$a,[0,\infty),\top$} (0)
		(1) edge[in=-70, out=-110] node[below] {$b,[0,\infty),\top$} (0);
		\node [below=25pt, align=flush center,text width=3cm] at (0) {$\mathcal{H}_3$};
		\end{tikzpicture}
		& &
		\begin{tikzpicture}[->, >=stealth', shorten >=1pt, auto, node distance=3cm, semithick, scale=0.9, every node/.style={scale=0.7}]
		\node[initial,state] (0) {$q_{-}$};
		\node[accepting, state] (1) [right of = 0] {$q_{+}$};
		\path (0) edge [in= 145, out=110, loop] node[left] {$(a,0,\top)$} (0)
		(0) edge [in= -145, out=-110, loop] node[left] {$(b,0,\top)$} (0)
		(0) edge node[above] {$(a,1.1,\bot)$} (1)
		(1) edge[in=-30, out=-150] node[below] {$(a,0,\top)$} (0)
		(1) edge[in=-70, out=-110] node[below] {$(b,0,\top)$} (0)
		(1) edge[loop above] node[above] {$(b,2,\top)$} (1);
		\node [below=25pt, align=flush center,text width=3cm] at (0) {$\text{M}_4$};
		\end{tikzpicture}
		& &
		\begin{tikzpicture}[->, >=stealth', shorten >=1pt, auto, node distance=3cm, semithick, scale=0.9, every node/.style={scale=0.7}]
		\node[initial,state] (0) {$q_{-}$};
		\node[accepting, state] (1) [right of = 0] {$q_{+}$};
		\path (0) edge [in= 145, out=110, loop] node[left] {$a,[0,1],\top$} (0)
		(0) edge [in= -145, out=-110, loop] node[left] {$b,[0,\infty),\top$} (0)
		(0) edge node[above] {$a,(1,\infty),\bot$} (1)
		(1) edge[in=-30, out=-150] node[below] {$a,[0,\infty),\top$} (0)
		(1) edge[in=-70, out=-110] node[below] {$b,[0,2),\top$} (0)
		(1) edge[loop above] node[above] {$b,[2,\infty),\top$} (1);
		\node [below=25pt, align=flush center,text width=3cm] at (0) {$\mathcal{H}_4$};
		\end{tikzpicture}
		\\
		\begin{tikzpicture}[->, >=stealth', shorten >=1pt, auto, node distance=3cm, semithick, scale=0.9, every node/.style={scale=0.7}]
		\node[initial,state] (0) {$q_{-}$};
		\node[accepting, state] (1) [right of = 0] {$q_{+}$};
		\path (0) edge [in= 145, out=110, loop] node[left] {$(a,0,\top)$} (0)
		(0) edge [in= 35, out=70, loop] node[above] {$(a,3,\top)$} (0)
		(0) edge [in= -145, out=-110, loop] node[left] {$(b,0,\top)$} (0)
		(0) edge node[above] {$(a,1.1,\bot)$} (1)
		(1) edge[loop above] node[above] {$(b,2,\top)$} (1)
		(1) edge[in=-30, out=-150] node[below] {$(a,0,\top)$} (0)
		(1) edge[in=-70, out=-110] node[below] {$(b,0,\top)$} (0);
		\node [below=25pt, align=flush center,text width=3cm] at (0) {$\text{M}_5$};
		\end{tikzpicture}
		& &
		\begin{tikzpicture}[->, >=stealth', shorten >=1pt, auto, node distance=3cm, semithick, scale=0.9, every node/.style={scale=0.7}]
		\node[initial,state] (0) {$q_{-}$};
		\node[accepting, state] (1) [right of = 0] {$q_{+}$};
		\path (0) edge [in= 145, out=110, loop] node[left] {$a,[0,1],\top$} (0)
		(0) edge [in= 35, out=70, loop] node[above] {$a,[3,\infty),\top$} (0)
		(0) edge [in= -145, out=-110, loop] node[left] {$b,[0,\infty),\top$} (0)
		(0) edge node[above] {$a,(1,3),\bot$} (1)
		(1) edge[loop above] node[above] {$b,[2,\infty),\top$} (1)
		(1) edge[in=-30, out=-150] node[below] {$a,[0,\infty),\top$} (0)
		(1) edge[in=-70, out=-110] node[below] {$b,[0,2),\top$} (0);
		\node [below=25pt, align=flush center,text width=3cm] at (0) {$\mathcal{H}_5$};
		\end{tikzpicture}
		& &
		\begin{tikzpicture}[->, >=stealth', shorten >=1pt, auto, node distance=3.8cm, semithick, scale=0.9, every node/.style={scale=0.7}]
		\node[initial,state] (0) {$q_{-+}$};
		\node[accepting, state] (1) [right of = 0] {$q_{+-}$};
		\node[state] (2) [below right= 1.4cm and .9cm of 0] {$q_{--}$};
		\path  (0) edge node[above] {$(a,1.1,\bot)$} (1)
		(0) edge [loop above] node[above] {$(b,0,\top)$} (0)
		(1) edge [loop above] node[above] {$(b,2,\top)$} (1)
		(0) edge node[below,sloped]  {$(a,0,\top)$} (2)
		(0) edge [bend right] node[below,sloped,pos=.35]  {$(a,3,\top)$} (2)
		(1) edge [bend right] node[above,sloped,pos=.48]  {$(a,0,\top)$} (2)
		(1) edge node[above,sloped]  {$(a,1.1,\top)$} (2)
		(1) edge [bend left] node[below,sloped,pos=.45]  {$(b,0,\top)$} (2)
		(2) edge [in= -145, out=-110, loop] node[left] {$(a,0,\top)$} (2)
		(2) edge [in=175, out=-150,loop] node[left] {$(a,1.1,\top)$} (2)
		(2) edge [in= -70, out=-35, loop] node[right] {$(b,0,\top)$} (2);
		\node [below = 7pt  of 2, align=flush center,text width=3cm] {$\text{M}_9$};
		\end{tikzpicture}
		\\
		\begin{tikzpicture}[->, >=stealth', shorten >=1pt, auto, node distance=3.8cm, semithick, scale=0.9, every node/.style={scale=0.7}]
		\node[initial,state] (0) {$q_{-+}$};
		\node[accepting, state] (1) [right of = 0] {$q_{+-}$};
		\node[state] (2) [below right= 1.4cm and .9cm of 0] {$q_{--}$};
		\path  (0) edge node[above] {$a$, $(1,3)$, $\bot$} (1)
		(0) edge [loop above] node[above] {$b$, $[0,\infty)$, $\top$} (0)
		(1) edge [loop above] node[above] {$b$, $[2, \infty)$, $\top$} (1)
		(0) edge node[below,sloped]  {$a$, $[0,1]$, $\top$} (2)
		(0) edge [bend right] node[below,sloped,pos=.35]  {$a$, $[3,\infty)$, $\top$} (2)
		(1) edge [bend right] node[above,sloped,pos=.48]  {$a$, $[0,1]$, $\top$} (2)
		(1) edge node[above,sloped]  {$a$, $(1,\infty)$, $\top$} (2)
		(1) edge [bend left] node[below,sloped,pos=.45]  {$b$, $[0,2)$, $\top$} (2)
		(2) edge [in= -145, out=-110, loop] node[left] {$a$, $[0,1]$, $\top$} (2)
		(2) edge [in=175, out=-150,loop] node[left] {$a$, $(1,\infty)$, $\top$} (2)
		(2) edge [in= -70, out=-35, loop] node[right] {$b$, $[0,\infty)$, $\top$} (2);
		\node [below = 7pt  of 2, align=flush center,text width=3cm] {$\mathcal{H}_9$};
		\end{tikzpicture}
		& &
		\begin{tikzpicture}[->, >=stealth', shorten >=1pt, auto, node distance=3.8cm, semithick, scale=0.9, every node/.style={scale=0.7}]
		\node[initial,state] (0) {$q_{-+}$};
		\node[accepting, state] (1) [right of = 0] {$q_{+-}$};
		\node[state] (2) [below right= 1.4cm and .9cm of 0] {$q_{--}$};
		\path  (0) edge node[above] {$(a,1.1,\bot)$} (1)
		(0) edge [loop above] node[above] {$(b,0,\top)$} (0)
		(1) edge [loop above] node[above] {$(b,2,\top)$} (1)
		(0) edge node[below,sloped]  {$(a,0,\top)$} (2)
		(0) edge [bend right] node[below,sloped,pos=.35]  {$(a,3,\top)$} (2)
		(1) edge [bend right] node[above,sloped,pos=.48]  {$(a,0,\top)$} (2)
		(1) edge node[above,sloped]  {$(a,1.1,\top)$} (2)
		(1) edge [bend left] node[below,sloped,pos=.45]  {$(b,0,\top)$} (2)
		(1) edge [in=-10, out=-60] node[below,sloped,pos=.45]  {$(b,4,\top)$} (2)
		(2) edge [in= -145, out=-110, loop] node[left] {$(a,0,\top)$} (2)
		(2) edge [in=175, out=-150,loop] node[left] {$(a,1.1,\top)$} (2)
		(2) edge [in= -70, out=-35, loop] node[right] {$(b,0,\top)$} (2);
		\node [below = 7pt  of 2, align=flush center,text width=3cm] {$\text{M}_{10}$};
		\end{tikzpicture}
		& &
		\begin{tikzpicture}[->, >=stealth', shorten >=1pt, auto, node distance=3.8cm, semithick, scale=0.9, every node/.style={scale=0.7}]
		\node[initial,state] (0) {$q_{-+}$};
		\node[accepting, state] (1) [right of = 0] {$q_{+-}$};
		\node[state] (2) [below right= 1.4cm and .9cm of 0] {$q_{--}$};
		\path  (0) edge node[above] {$a$, $(1,3)$, $\bot$} (1)
		(0) edge [loop above] node[above] {$b$, $[0,\infty)$, $\top$} (0)
		(1) edge [loop above] node[above] {$b$, $[2,4)$, $\top$} (1)
		(0) edge node[below,sloped]  {$a$, $[0,1]$, $\top$} (2)
		(0) edge [bend right] node[below,sloped,pos=.35]  {$a$, $[3,\infty)$, $\top$} (2)
		(1) edge [bend right] node[above,sloped,pos=.48]  {$a$, $[0,1]$, $\top$} (2)
		(1) edge node[above,sloped]  {$a$, $(1,\infty)$, $\top$} (2)
		(1) edge [bend left] node[below,sloped,pos=.45]  {$b$, $[0,2)$, $\top$} (2)
		(1) edge [in=-10, out=-60] node[below,sloped,pos=.45]  {$(b,[4,\infty),\top)$} (2)
		(2) edge [in= -145, out=-110, loop] node[left] {$a$, $[0,1]$, $\top$} (2)
		(2) edge [in=175, out=-150,loop] node[left] {$a$, $(1,\infty)$, $\top$} (2)
		(2) edge [in= -70, out=-35, loop] node[right] {$b$, $[0,\infty)$, $\top$} (2);
		\node [below = 7pt  of 2, align=flush center,text width=3cm] {$\mathcal{H}_{10}$};
		\end{tikzpicture}
	\end{tabular}
	}
	\end{center}
	\caption{Iterations of the timed observation table, DFAs and hypotheses w.r.t. $\mathcal{A}$ in Fig.~\ref{fig:dota_cota}.}
	\label{fig:ie_automata}
\end{figure}
\oomit{
\noindent the learner adds the normalized counterpart to the table and thus get the second instance $\mathbf{T}_2$ which is not closed. Hence, he moves $(a,1.1,\bot)$ from $\bm{R}$ to $\bm{S}$, and then adds reset-logical-timed words $(a,1.1,\bot)(a,0,\top)$ and $(a,1.1,\bot)(b,0,\top)$ to $\bm{R}$ after membership queries. The instance $\mathbf{T}_3$ is prepared. After two iterations, he arrives at $\mathbf{T}_5$. As described in Example~\ref{example:wrong_ctx}, the learner normalizes the transformed reset-logical-timed word $(a,0,\top)(a,1.3,\top)$ as $(a,0,\top)(a,1.1,\top)$. Then he gets $\mathbf{T}_6$ which is not consistent, since $\mathit{row}(\epsilon)=\mathit{row}((a,0,\top))$ and $\Pi_{\{1,2\}}((a,1.1,\bot))=\Pi_{\{1,2\}}((a,1.1,\top))$, but $\mathit{row}((a,1.1,\bot)) \neq \mathit{row}((a,0,\top)(a,1.1,\top))$. Hence, he adds $(a,1.1)\cdot\epsilon$ to $\bm{E}$ for $f((a,1.1,\bot)\cdot\epsilon) \neq f((a,0,\top)(a,1.1,\top) \cdot \epsilon)$ leading to $\mathbf{T}_7$. The process goes on until the learner finally gets a hypothesis $\mathcal{H}_{10}$ which recognizes the target timed language. Obviously, after combining transitions according to guards in $\mathcal{H}_{10}$,
we get a COTA same to $\mathbb{A}$ as depicted in Fig.~\ref{fig:dota_cota}.
}

\section{Algorithm for Equivalence Queries}\label{appendix_algs}

\begin{algorithm}[h]
	\caption{equivalence\_query($\mathcal{H}$)}
	\label{alg:equivalent}
	\SetKwInOut{Input}{input}
	\SetKwInOut{Output}{output}
	\Input{a hypothesis $\mathcal{H}$.}
	\Output{$\mathit{equivalent}$ : a Boolean value to identify whether $\mathcal{L}(\mathcal{H}) = \mathcal{L}(\mathbb{A})$ where COTA $\mathbb{A}$ recognizes the target language;\\
		$\mathit{ctx}$ : a counterexample.}
	$\mathit{equivalent}$ $\leftarrow$ $\bot$; $\mathit{ctx}$ $\leftarrow$ $\epsilon$\;
	$\mathit{flag}_{-},\mathit{flag}_{+} \leftarrow$ $\top$\;
	\If{$\mathcal{L}(\mathcal{H})\nsubseteq\mathcal{L}(\mathbb{A})$}{ 
		$\mathit{flag}_{-}$ $\leftarrow$ $\bot$ \tcp*{negative counterexample}
		generate a reset-delay-timed word $\omega_{r}$ from a bad configration $W$\;
		$\mathit{ctx}_{-}$ $\leftarrow$ $(\omega_{r}, -)$\;
	}
	\If{$\mathcal{L}(\mathbb{A})\nsubseteq\mathcal{L}(\mathcal{H})$}{
		$\mathit{flag}_{+}$ $\leftarrow$ $\bot$ \tcp*{positive counterexample}
		generate a reset-delay-timed word ${\omega_{r}}'$ from a bad configration $W'$\;
		$\mathit{ctx}_{+}$ $\leftarrow$ $({\omega_{r}}', +)$\;
	}
	$\mathit{equivalent}$ $\leftarrow$ $\mathit{flag}_{-} \wedge \mathit{flag}_{+}$\;
	\If{$\mathit{equivalent}$ = $\bot$}{
		$\mathit{ctx}$ $\leftarrow$ select a counterexample from $\mathit{ctx}_{+}$ and $\mathit{ctx}_{-}$\;
	}
	\Return $\mathit{equivalent}$, $\mathit{ctx}$\;
\end{algorithm}

\section{Automata Pertaining to the TCP Protocol}\label{appendix_tcp}

Relabelling actions in~\cite{rfc793} results in the automaton to be learned (left of Fig.~\ref{fig:tcp}):
\begin{align*}
\{&a\colon \mathtt{passive\ OPEN}, b\colon \mathtt{rcv\ SYN}, c\colon \mathtt{SEND}, d\colon \mathtt{rcv\ SYN,ACK}, e\colon \mathtt{rcv\ ACK\ of\ SYN},\\
&f\colon \mathtt{CLOSE}, g\colon \mathtt{rcv\ FIN}, h\colon \mathtt{rcv\ ACK\ of\ FIN}, i\colon \mathtt{Timeout}, j\colon \mathtt{active\ OPEN}\}.
\end{align*}

\noindent Mapping locations in the learnt automaton (right of Fig.~\ref{fig:tcp}) back to that in~\cite{rfc793}:
\begin{align*}
\{&q_1\colon \mathtt{CLOSED}, q_2\colon \mathtt{LISTEN}, q_3\colon \mathtt{SYN\ SENT}, q_4\colon \mathtt{SYN\ RCVD}, q_5\colon \mathtt{ESTAB},\\
&q_6, q_{15}\colon \mathtt{FINWAIT-1}, q_7, q_{14}\colon \mathtt{CLOSE\ WAIT}, q_8, q_{13}\colon \mathtt{CLOSING},\\ &q_9, q_{12}\colon \mathtt{FINWAIT-2}, q_{10}\colon \mathtt{LAST-ACK}, q_{11}\colon \mathtt{TIME\ WAIT}\}.
\end{align*}

\begin{sidewaysfigure}
	\begin{minipage}[b]{.49\linewidth}
		\centering
		\begin{tikzpicture}[->, >=stealth', shorten >=1pt, auto, node distance=2.7cm, semithick, scale = 0.9,every node/.style={ellipse, scale=0.7}]
		\node[initial above, accepting, state, ellipse]  (closed) {$\mathtt{CLOSED}$};
		\node[state, ellipse](listen) [below = 1.7cm of closed] {$\mathtt{LISTEN}$};
		\node[state, ellipse](synrcvd) [below left = 1cm and 1.8cm of listen] {$\mathtt{SYN\ RCVD}$};
		\node[state, ellipse](synsent) [below right = 1cm and 1.8cm of listen] {$\mathtt{SYN\ SENT}$};
		\node[accepting, state, ellipse](estab) [below = 2cm of listen] {$\mathtt{ESTAB}$};
		\node[state, ellipse, fill=blue!20](finwait1) [below = 2cm of synrcvd] {$\mathtt{FINWAIT-1}$};
		\node[state, ellipse, fill=red!20](closewait) [below = 2cm of synsent] {$\mathtt{CLOSE\ WAIT}$};
		\node[state, ellipse, fill=yellow!40](closing) [below = 2cm of estab] {$\mathtt{CLOSING}$};
		\node[state, ellipse, fill=green!20](finwait2) [below = 2cm of finwait1] {$\mathtt{FINWAIT-2}$};
		\node[state, ellipse](lastack) [below = 2cm of closewait] {$\mathtt{LAST-ACK}$};
		\node[state, ellipse](timewait) [below = 2cm of closing] {$\mathtt{TIME\ WAIT}$};
		
		\path (closed) edge [out=-120, in=120] node[below, sloped, pos=.5] {$a, [0,\infty), \top$} (listen)
		(listen) edge [out=60, in=-60] node[above, sloped, pos=.5] {$f, [1,\infty), \top$} (closed)
		(listen) edge node[sloped, above, pos=0.5] {$b, [0,2], \bot$} (synrcvd)
		(listen) edge node[sloped, above, pos=0.5] {$c, [0,1], \bot$} (synsent)
		(synsent) edge node[sloped, above, pos=0.5] {$b, [0,2], \bot$} (synrcvd)
		(synsent) edge node[sloped, above, pos=0.6] {$d, [0,5], \top$} (estab)
		(closed) edge[out=-10, in=80] node[above, sloped, pos=.5] {$j, [0,\infty), \top$} (synsent)
		(synrcvd) edge node[sloped, above, pos=0.6] {$e, [0,5], \top$} (estab)
		(synrcvd) edge node[sloped, below, pos=0.5] {$f, [0,\infty), \top$} (finwait1)
		(estab) edge node[sloped, above, pos=0.5] {$f, [0,\infty), \bot$} (finwait1)
		(estab) edge node[sloped, above, pos=0.5] {$g, [0,\infty), \bot$} (closewait)
		(finwait1) edge node[sloped, below, pos=0.5] {$h, [0,3), \bot$} (finwait2)
		(finwait1) edge node[sloped, above, pos=0.5] {$g, [0,4), \bot$} (closing)
		(finwait2) edge node[sloped, above, pos=0.5] {$g, [0,7), \top$} (timewait)
		(closewait) edge node[sloped, below, pos=0.5] {$f, [0,\infty), \bot$} (lastack)
		(closing) edge node[sloped, above, pos=0.5] {$h, [0,7), \top$} (timewait)
		;
		\draw[semithick] (synsent) -- node[above, sloped, pos=.5] {$f, [1,\infty), \top$} (.5,-.22);
		\draw[semithick] (lastack.east) -| (4.6,0) node [midway, below, sloped, pos=.75] (eastedge) {$h, [2,7), \top$} -- (closed.east);
		\draw[semithick] (timewait.west) -| (-4.6,0) node [midway, above, sloped, pos=.75] (westedge) {$i, [2,2], \top$} -- (closed.west); 
		\end{tikzpicture}
	\end{minipage}
	\hfill
	\begin{minipage}[b]{.49\linewidth}
		\centering
		\begin{tikzpicture}[->, >=stealth', shorten >=1pt, auto, node distance=2.7cm, semithick, scale = 0.9,every node/.style={scale=0.7}]
		\node[initial above, accepting, state]  (q1) {$q_1$};
		\node[state](q2) [below = 1.7cm of q1] {$q_2$};
		\node[state](q4) [below left = 1cm and 1.8cm of q2] {$q_4$};
		\node[state](q3) [below right = 1cm and 1.8cm of q2] {$q_3$};
		\node[accepting, state](q5) [below = 2cm of q2] {$q_5$};
		\node[state, fill=blue!20](q6) [below left = 2.7cm and 0.7cm of q4] {$q_6$};
		\node[state, fill=blue!20](q15) [below right = 2.7cm and 0.7cm of q4] {$q_{15}$};
		\node[state, fill=red!20](q7) [below right = 2.7cm and 0.7cm of q3] {$q_7$};
		\node[state, fill=red!20](q14) [below left = 2.7cm and 0.7cm of q3] {$q_{14}$};
		\node[state, fill=green!20](q12) at (-3.2,-9.7) {$q_{12}$};
		\node[state, fill=green!20](q9) at (-4.8,-9.7) {$q_{9}$};
		\node[state, fill=yellow!40](q8) at (-1.6,-9.7) {$q_8$};
		\node[state, fill=yellow!40](q13) at (0,-9.7) {$q_{13}$};
		\node[state](q10) at (2.5,-9.7) {$q_{10}$};
		\node[state](q11) [below = 6.6cm of q4] {$q_{11}$};
		
		\path (q1) edge [out=-120, in=120] node[below, sloped, pos=.5] {$a, [0,\infty), \top$} (q2)
		(q2) edge [out=60, in=-60] node[above, sloped, pos=.5] {$f, [1,\infty), \top$} (q1)
		(q2) edge node[sloped, above, pos=0.5] {$b, [0,2], \bot$} (q4)
		(q2) edge node[sloped, above, pos=0.5] {$c, [0,1], \bot$} (q3)
		(q3) edge node[sloped, above, pos=0.5] {$b, [0,2], \bot$} (q4)
		(q3) edge node[sloped, above, pos=0.6] {$d, [0,5], \top$} (q5)
		(q4) edge node[sloped, above, pos=0.6] {$e, [0,5], \top$} (q5)
		(q1) edge[out=-10, in=80] node[above, sloped, pos=.5] {$j, [0,\infty), \top$} (q3)
		(q4) edge node[sloped, above, pos=0.5] {$f, [0,\infty), \top$} (q6)
		(q5) edge node[sloped, above, pos=0.5] {$f, [0,2), \bot$} (q6)
		(q5) edge node[sloped, above, pos=0.5] {$f, [2,\infty), \bot$} (q15)
		(q5) edge node[sloped, above, pos=0.5] {$g, [0,2), \bot$} (q7)
		(q5) edge node[sloped, above, pos=0.5] {$g, [2,\infty), \bot$} (q14)
		(q7) edge node[sloped, below, pos=0.5] {$f, [0,\infty), \bot$} (q10)
		(q14) edge node[sloped, above, pos=0.5] {$f, [2,\infty), \bot$} (q10)
		(q6) edge node[sloped, above, pos=0.5] {$h, [0,2), \bot$} (q9)
		(q6) edge node[sloped, below, pos=0.5] {$h, [2,3), \bot$} (q12)
		(q6) edge node[sloped, below, pos=0.36] {$g, [0,2), \bot$} (q8)
		(q6) edge node[sloped, above, pos=0.7] {$g, [2,4), \bot$} (q13)
		(q15) edge node[sloped, above, pos=0.3] {$h, [2,3), \bot$} (q12)
		(q15) edge node[sloped, above, pos=0.5] {$g, [2,4), \bot$} (q13)
		(q9) edge node[sloped, below, pos=0.5] {$g, [0,7), \top$} (q11)
		(q12) edge node[sloped, below, pos=0.5] {$g, [2,7), \top$} (q11)
		(q8) edge node[sloped, below, pos=0.4] {$h, [0,7), \top$} (q11)
		(q13) edge node[sloped, below, pos=0.5] {$h, [2,7), \top$} (q11);
		
		\draw[semithick] (q3) -- node[above, sloped, pos=.5] {$f, [1,\infty), \top$} (.3,-.18);
		\draw[semithick] (q10.east) -| (4.4,0) node [midway, below, sloped, pos=.75] (eastedge1) {$h, [2,7), \top$} -- (q1.east);
		\draw[semithick] (q11.west) -| (-5.4,0) node [midway, above, sloped, pos=.75] (westedge1) {$i, [2,2], \top$} -- (q1.west); 
		
		\end{tikzpicture}
	\end{minipage}
	\caption{Left: The functional specification of the TCP protocol with timing constraints. Right: The learnt functional specification of the TCP protocol. Colors indicate the splitting of locations incurred in the learning process.}
	\label{fig:tcp}
\end{sidewaysfigure}

\oomit{
	\begin{figure}[h]
		\centering
		\begin{tikzpicture}[->, >=stealth', shorten >=1pt, auto, node distance=2.7cm, semithick, scale = 0.9,every node/.style={ellipse, scale=0.7}]
		\node[initial above, accepting, state, ellipse]  (closed) {$\mathtt{CLOSED}$};
		\node[state, ellipse](listen) [below = 1.7cm of closed] {$\mathtt{LISTEN}$};
		\node[state, ellipse](synrcvd) [below left = 1cm and 1.8cm of listen] {$\mathtt{SYN\ RCVD}$};
		\node[state, ellipse](synsent) [below right = 1cm and 1.8cm of listen] {$\mathtt{SYN\ SENT}$};
		\node[accepting, state, ellipse](estab) [below = 2cm of listen] {$\mathtt{ESTAB}$};
		\node[state, ellipse, fill=blue!20](finwait1) [below = 2cm of synrcvd] {$\mathtt{FINWAIT-1}$};
		\node[state, ellipse, fill=red!20](closewait) [below = 2cm of synsent] {$\mathtt{CLOSE\ WAIT}$};
		\node[state, ellipse, fill=yellow!40](closing) [below = 2cm of estab] {$\mathtt{CLOSING}$};
		\node[state, ellipse, fill=green!20](finwait2) [below = 2cm of finwait1] {$\mathtt{FINWAIT-2}$};
		\node[state, ellipse](lastack) [below = 2cm of closewait] {$\mathtt{LAST-ACK}$};
		\node[state, ellipse](timewait) [below = 2cm of closing] {$\mathtt{TIME\ WAIT}$};
		
		\path (closed) edge [out=-120, in=120] node[below, sloped, pos=.5] {$a, [0,\infty), \top$} (listen)
		(listen) edge [out=60, in=-60] node[above, sloped, pos=.5] {$f, [1,\infty), \top$} (closed)
		(listen) edge node[sloped, above, pos=0.5] {$b, [0,2], \bot$} (synrcvd)
		(listen) edge node[sloped, above, pos=0.5] {$c, [0,1], \bot$} (synsent)
		(synsent) edge node[sloped, above, pos=0.5] {$b, [0,2], \bot$} (synrcvd)
		(synsent) edge node[sloped, above, pos=0.6] {$d, [0,5], \top$} (estab)
		(closed) edge[out=-10, in=80] node[above, sloped, pos=.5] {$j, [0,\infty), \top$} (synsent)
		(synrcvd) edge node[sloped, above, pos=0.6] {$e, [0,5], \top$} (estab)
		(synrcvd) edge node[sloped, below, pos=0.5] {$f, [0,\infty), \top$} (finwait1)
		(estab) edge node[sloped, above, pos=0.5] {$f, [0,\infty), \bot$} (finwait1)
		(estab) edge node[sloped, above, pos=0.5] {$g, [0,\infty), \bot$} (closewait)
		(finwait1) edge node[sloped, below, pos=0.5] {$h, [0,3), \bot$} (finwait2)
		(finwait1) edge node[sloped, above, pos=0.5] {$g, [0,4), \bot$} (closing)
		(finwait2) edge node[sloped, above, pos=0.5] {$g, [0,7), \top$} (timewait)
		(closewait) edge node[sloped, below, pos=0.5] {$f, [0,\infty), \bot$} (lastack)
		(closing) edge node[sloped, above, pos=0.5] {$h, [0,7), \top$} (timewait)
		;
		\draw[semithick] (synsent) -- node[above, sloped, pos=.5] {$f, [1,\infty), \top$} (.5,-.22);
		\draw[semithick] (lastack.east) -| (4.6,0) node [midway, below, sloped, pos=.75] (eastedge) {$h, [2,7), \top$} -- (closed.east);
		\draw[semithick] (timewait.west) -| (-4.6,0) node [midway, above, sloped, pos=.75] (westedge) {$i, [2,2], \top$} -- (closed.west); 
		\end{tikzpicture}
		\caption{The functional specification of the TCP protocol with timing constraints.}
		\label{fig:tcp_A}
	\end{figure}
}

\oomit{
\begin{figure}[h]
	\centering
	\begin{tikzpicture}[->, >=stealth', shorten >=1pt, auto, node distance=2.7cm, semithick, scale = 0.9,every node/.style={scale=0.7}]
	\node[initial above, accepting, state]  (q1) {$q_1$};
	\node[state](q2) [below = 1.7cm of q1] {$q_2$};
	\node[state](q4) [below left = 1cm and 1.8cm of q2] {$q_4$};
	\node[state](q3) [below right = 1cm and 1.8cm of q2] {$q_3$};
	\node[accepting, state](q5) [below = 2cm of q2] {$q_5$};
	\node[state, fill=blue!20](q6) [below left = 2.7cm and 0.7cm of q4] {$q_6$};
	\node[state, fill=blue!20](q15) [below right = 2.7cm and 0.7cm of q4] {$q_{15}$};
	\node[state, fill=red!20](q7) [below right = 2.7cm and 0.7cm of q3] {$q_7$};
	\node[state, fill=red!20](q14) [below left = 2.7cm and 0.7cm of q3] {$q_{14}$};
	\node[state, fill=green!20](q12) at (-3.2,-9.7) {$q_{12}$};
	\node[state, fill=green!20](q9) at (-4.8,-9.7) {$q_{9}$};
	\node[state, fill=yellow!40](q8) at (-1.6,-9.7) {$q_8$};
	\node[state, fill=yellow!40](q13) at (0,-9.7) {$q_{13}$};
	\node[state](q10) at (2.5,-9.7) {$q_{10}$};
	\node[state](q11) [below = 6.6cm of q4] {$q_{11}$};

	\path (q1) edge [out=-120, in=120] node[below, sloped, pos=.5] {$a, [0,\infty), \top$} (q2)
	(q2) edge [out=60, in=-60] node[above, sloped, pos=.5] {$f, [1,\infty), \top$} (q1)
	(q2) edge node[sloped, above, pos=0.5] {$b, [0,2], \bot$} (q4)
	(q2) edge node[sloped, above, pos=0.5] {$c, [0,1], \bot$} (q3)
	(q3) edge node[sloped, above, pos=0.5] {$b, [0,2], \bot$} (q4)
	(q3) edge node[sloped, above, pos=0.6] {$d, [0,5], \top$} (q5)
	(q4) edge node[sloped, above, pos=0.6] {$e, [0,5], \top$} (q5)
	(q1) edge[out=-10, in=80] node[above, sloped, pos=.5] {$j, [0,\infty), \top$} (q3)
	(q4) edge node[sloped, above, pos=0.5] {$f, [0,\infty), \top$} (q6)
	(q5) edge node[sloped, above, pos=0.5] {$f, [0,2), \bot$} (q6)
	(q5) edge node[sloped, above, pos=0.5] {$f, [2,\infty), \bot$} (q15)
	(q5) edge node[sloped, above, pos=0.5] {$g, [0,2), \bot$} (q7)
	(q5) edge node[sloped, above, pos=0.5] {$g, [2,\infty), \bot$} (q14)
	(q7) edge node[sloped, below, pos=0.5] {$f, [0,\infty), \bot$} (q10)
	(q14) edge node[sloped, above, pos=0.5] {$f, [2,\infty), \bot$} (q10)
	(q6) edge node[sloped, above, pos=0.5] {$h, [0,2), \bot$} (q9)
	(q6) edge node[sloped, below, pos=0.5] {$h, [2,3), \bot$} (q12)
	(q6) edge node[sloped, below, pos=0.36] {$g, [0,2), \bot$} (q8)
	(q6) edge node[sloped, above, pos=0.7] {$g, [2,4), \bot$} (q13)
	(q15) edge node[sloped, above, pos=0.3] {$h, [2,3), \bot$} (q12)
	(q15) edge node[sloped, above, pos=0.5] {$g, [2,4), \bot$} (q13)
	(q9) edge node[sloped, below, pos=0.5] {$g, [0,7), \top$} (q11)
	(q12) edge node[sloped, below, pos=0.5] {$g, [2,7), \top$} (q11)
	(q8) edge node[sloped, below, pos=0.4] {$h, [0,7), \top$} (q11)
	(q13) edge node[sloped, below, pos=0.5] {$h, [2,7), \top$} (q11);
	
	\draw[semithick] (q3) -- node[above, sloped, pos=.5] {$f, [1,\infty), \top$} (.3,-.18);
	\draw[semithick] (q10.east) -| (4.4,0) node [midway, below, sloped, pos=.75] (eastedge1) {$h, [2,7), \top$} -- (q1.east);
	\draw[semithick] (q11.west) -| (-5.4,0) node [midway, above, sloped, pos=.75] (westedge1) {$i, [2,2], \top$} -- (q1.west); 

	\end{tikzpicture}
	\caption{The learnt functional specification of the TCP protocol.}
	\label{fig:tcp_H}
\end{figure}
}

\end{subappendices}
\oomit{

\vfill

{\small\medskip\noindent{\bf Open Access} This chapter is licensed under the terms of the Creative Commons Attribution 4.0 International License (\url{http://creativecommons.org/licenses/by/4.0/}), which permits use, sharing, adaptation, distribution and reproduction in any medium or format, as long as you give appropriate credit to the original author(s) and the source, provide a link to the Creative Commons license and indicate if changes were made.}

{\small \spaceskip .28em plus .1em minus .1em The images or other third party material in this chapter are included in the chapter's Creative Commons license, unless indicated otherwise in a credit line to the material.~If material is not included in the chapter's Creative Commons license and your intended use is not permitted by statutory regulation or exceeds the permitted use, you will need to obtain permission directly from the copyright holder.}

\medskip\noindent\includegraphics{cc_by_4-0.eps}
}
\end{document}